
\documentclass[12pt]{article}

\usepackage{hyperref}

\usepackage{amsmath} 
\usepackage{amssymb}
\usepackage{amsfonts}
\usepackage{amsthm}

\usepackage{footnote}
\usepackage{footmisc}

\newcommand{\astfootnote}[1]{%
	\let\oldthefootnote=\thefootnote%
	\setcounter{footnote}{0}%
	\renewcommand{\thefootnote}{\fnsymbol{footnote}}%
	\footnote{#1}%
	\let\thefootnote=\oldthefootnote%
}

\textwidth 175mm \textheight 230mm \topmargin -10mm \oddsidemargin
-5mm

\newtheorem{theorem}{Theorem}
\newtheorem{lemma}{Lemma}
\newtheorem{corollary}{Corollary}

\theoremstyle{definition}

\begin{document}
	\begin{center}
		\Large
	\textbf{Higher order moments dynamics for some multimode  quantum master equations}\astfootnote{This work is supported by the Russian Science Foundation under grant 17-71-20154.}	
	
		\large 
		\textbf{Iu.A. Nosal}\footnote{Department of Mathematical Physics, Steklov Mathematical Institute of Russian Academy of Sciences	ul. Gubkina 8, Moscow, 119991 Russia\\ E-mail: \href{mailto:nosal.ia16@physics.msu.ru}{nosal.ia16@physics.msu.ru}}\textbf{,}
		\textbf{A.E. Teretenkov}\footnote{Department of Mathematical Methods for Quantum Technologies, Steklov Mathematical Institute of Russian Academy of Sciences,
			ul. Gubkina 8, Moscow 119991, Russia\\ E-mail:\href{mailto:taemsu@mail.ru}{taemsu@mail.ru}}
		\end{center}
		
			\footnotesize
			We derive Heisenberg equations  for arbitrary high order moments of creation and annihilation operators in the case of the quantum master equation with a multimode generator which is quadratic in creation and annihilation operators and obtain their solutions. Based on them we also derive similar equations for the case of the quantum master equation, which occur after averaging the dynamics  with a quadratic generator with respect to the classical Poisson process. This allows us to show that dynamics of  arbitrary finite-order  moments of creation and annihilation operators is fully defined by finite number of linear differential equations in this case.
			\normalsize

	\section{Introduction}
	
	This paper is further development of the series of works \cite{Ter19, Teretenkov20, NosTer20, Linowski2021, Ivanov2022} devoted to quantum master equations for which dynamics of moments can be obtained explicitly. Namely, in \cite{Teretenkov20, NosTer20} we have shown that the master equations, which arise as averaging of unitary evolution with a quadratic (in creation and annihilation operators) Hamiltonian with respect to classical Poisson process, lead to closed linear ordinary differential equations for moments of creation and annihilation operators of fixed order. In \cite{Ivanov2022} something similar was done for the Wiener stochastic processes. The main of this article is to generalize the results of \cite{Teretenkov20} and consider the Gorini-Kossakowski-Sudarshan-Lindblad (GKSL) generators which arise as averaging of initially non-unitary dynamics by the classical Poisson process.   This initial non-unitary dynamics is also assumed to have a quadratic generator in creation and annihilation operators. In contrast to \cite{Teretenkov20, NosTer20, Ivanov2022} we obtain the closed linear ordinary differential equations for the moments of creation and annihilation operators up to fixed order, not for fixed order moments only. Thus, in this case the equations for the moments of creation and annihilation operators of given order depend on solutions of equations for lower-order moments. So they can be solved iteratively. To derive explicit form of such equations we consider several auxiliary problems, which we think can be interesting by themselves.
	
	In section \ref{sec:Leibniz} we derive a formula which allows one to calculate a GKSL generator of a product of operators. It can be viewed as a  generalization of the Leibniz formula for a commutator, so we call it the Leibniz type formula. 
	
	In section \ref{sec:Heisenberg} we use this formula to derive the Heisenberg equations in the case of GKSL generators, which are quadratic in bosonic creation and annihilation operators. Despite the fact that these generators are well studied \cite{bausch1967description, Vanheuverzwijn1978, dodonov1985quantum, DodMan86, Prosen10, Heinosaari2010}, we have not found the explicit form of the Heisenberg equations for higher-order (third and higher) moments in literature in this case. On the one hand, there are some reasons for it. Namely, for Gaussian states the higher moments could be defined by the Isserlis --- Wick theorem (as we discuss in section \ref{sec:Isserlis}). Moreover, even for non-Gaussian states the dynamics is fully defined by dynamics of first and second moments as a special case of quantum Gaussian channels \cite[Sec.~12.4]{Holevo2012}. On the other hand, for our purposes we need these general formulae as an auxiliary result  (in particular, we  use them in section \ref{sec:Poisson} in the case, when the Isserlis --- Wick theorem is not applicable) and the higher moments are important for calculation of covariances of such widespread properties as intensities of electromagnetic field \cite[Subsec.~12.12.2]{Mandel1995}, so its explicit description could be useful even for quadratic generators. 
	
	In section \ref{sec:HeisSol} we derive the solutions of such equations. As the form of the solution is not simplified much in the case of time-independent coefficients we assume a general time-dependent case in this section despite the fact that it is not necessary for the main aim of our article. Moreover, this general time-independent situation could be interesting for the purposes of incoherent  control \cite{Morzhin2019, Morzhin2021} and tomography in the case of time-dependent generators \cite{Manko2022}. Let us remark that although the GKSL equations with a quadratic generator are well-studied, there is still keen current interest in further reasearch into their properties and applications   \cite{Agredo2021a, Agredo2021b, Barthel2021, Gaidash21, Medvedeva21, Bolamos2021, Flynn2021}.
	
	In section \ref{sec:Isserlis} we prove the Isserlis-Wick theorem in our notation. Despite the fact that this section mostly just checks our results form the previous sections, namely, we testify their consistency with this theorem, there is still interest in the Wick theorem in literature \cite{Ferialdi2021} as well as in related combinatorial aspects \cite{Schork2021}. So possibly our results can be useful for such a discussion.  There are some generalizations of the Isserlis theorem in classical probability \cite{Michalowicz2011, Vignat2012} and our results could be relevant for their quantum generalizations.
	
	In section \ref{sec:Poisson} we derive the Heisenberg equations for products of creation and annihilation operators in case of generators which occur after averaging the dynamics  with a quadratic generator with respect to the classical Poisson process. For this case we show that dynamics of  arbitrary finite-order  moments of creation and annihilation operators is fully defined by a finite number of linear differential equations.
	
	In Conclusions we discuss possible generalizations and directions for further development.
	
	\section{Leibniz type formula for GKSL generator}
	\label{sec:Leibniz}
	
	We consider the  GKSL equation for the density matrix of the form
	\begin{equation*}
		\frac{d}{dt} \rho(t) = \mathcal{L}(\rho(t)) ,
	\end{equation*}
	where
	\begin{equation*}
		\mathcal{L}(\rho) =  -i [\hat{H}, \rho] + \sum_j \left(\hat{C}_j \rho \hat{C}_j^{\dagger} - \frac12 \hat{C}_j^{\dagger}\hat{C}_j \rho - \frac12 \rho \hat{C}_j^{\dagger}\hat{C}_j  \right).
	\end{equation*}
	Let us remark that following the tradition (see e.g. \cite{Agredo2021b}) we call it a GKSL equation for unbounded generators too despite the fact that theorems from \cite{Lindblad76, Gorini76} are not applicable in this case and moreover other generators could occur \cite{Holevo2018} in the unbounded case. Let also us remark that we consider the unbounded operators only formally.
	
	In the Heisenberg representation
	\begin{equation*}
		\frac{d}{dt} X(t) = \mathcal{L}^*(X(t)) ,
	\end{equation*}
	where
	\begin{equation}\label{eq:GKSLHeisenberg}
		\mathcal{L}^*(X) = i [\hat{H}, X] + \frac12\sum_j ([\hat{C}_j^{\dagger}, X] \hat{C}_j + \hat{C}_j^{\dagger} [X, \hat{C}_j]).
	\end{equation}
	
	If $ \hat{C}_j = 0 $, then one has a Leibniz rule for a commutator \cite[Sec.~40]{Prasolov1994}
	\begin{equation*}
		\mathcal{L}^*(XY) =X\mathcal{L}^*(Y) + \mathcal{L}^*(X) Y
	\end{equation*}
	and it can be generalized to
	\begin{equation}\label{eq:LeibnizComm}
		\mathcal{L}^*(X_1 \ldots X_m) =X\mathcal{L}^*(Y) + \mathcal{L}^*(X) Y =  \sum_{k=1}^m X_1 \ldots X_{k-1}\mathcal{L}^*(X_{k}) X_{k+1} \ldots X_{m}.
	\end{equation}
	
	For $ \hat{C}_j  \neq 0$ corrections to  Leibniz rule occur \cite[p. 85]{Holevo2002}
	\begin{equation}\label{lem:LeibnizTypeTwoFold}
		\mathcal{L}^*(XY) =X\mathcal{L}^*(Y) + \mathcal{L}^*(X) Y  +  \sum_j [\hat{C}_j^{\dagger}, X][Y,\hat{C}_j ].
	\end{equation}
	
	Let us generalize formula \eqref{eq:LeibnizComm} for this case.
	\begin{lemma}\label{lem:LeibnizType}
		Let $ \mathcal{L}^* $ be defined by \eqref{eq:GKSLHeisenberg}, then
		\begin{align}
			\mathcal{L}^*(X_1 \ldots X_m) =& \sum_{k=1}^m X_1 \ldots X_{k-1}\mathcal{L}^*(X_{k}) X_{k+1} \ldots X_{m} \nonumber \\
			&+  \sum_{1 \leqslant k <l \leqslant m}\sum_j X_1 \ldots X_{k-1} [\hat{C}_j^{\dagger}, X_k ]  X_{k+1} \ldots X_{l-1} [X_{l}, \hat{C}_j ] X_{l+1} \ldots X_m. \label{eq:LeibnizType}
		\end{align}
	\end{lemma}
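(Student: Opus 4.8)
The plan is to argue by induction on $m$, using the two-fold correction formula \eqref{lem:LeibnizTypeTwoFold} as the engine. The base case $m=1$ is trivial, and the case $m=2$ is exactly \eqref{lem:LeibnizTypeTwoFold} (for $m=2$ the double sum in \eqref{eq:LeibnizType} consists of the single pair $k=1$, $l=2$, giving $\sum_j[\hat C_j^{\dagger},X_1][X_2,\hat C_j]$). For the inductive step I would write $X_1\cdots X_m=(X_1\cdots X_{m-1})X_m$ and apply \eqref{lem:LeibnizTypeTwoFold} with $X=X_1\cdots X_{m-1}$ and $Y=X_m$, obtaining the three contributions $(X_1\cdots X_{m-1})\mathcal{L}^*(X_m)$, $\mathcal{L}^*(X_1\cdots X_{m-1})\,X_m$, and $\sum_j[\hat C_j^{\dagger},X_1\cdots X_{m-1}][X_m,\hat C_j]$.

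Then I would process each piece separately. The first contribution is precisely the $k=m$ summand of the first sum in \eqref{eq:LeibnizType}. To the second contribution I apply the induction hypothesis to $\mathcal{L}^*(X_1\cdots X_{m-1})$ and multiply on the right by $X_m$; this yields the $k=1,\dots,m-1$ summands of the first sum (which, together with the first contribution, reconstitute $\sum_{k=1}^{m}$) and all double-sum terms with $1\leqslant k<l\leqslant m-1$. For the third contribution I expand $[\hat C_j^{\dagger},X_1\cdots X_{m-1}]$ by the ordinary Leibniz rule for a commutator, since $[\hat C_j^{\dagger},\,\cdot\,]$ is a derivation; this gives $\sum_{k=1}^{m-1}X_1\cdots X_{k-1}[\hat C_j^{\dagger},X_k]X_{k+1}\cdots X_{m-1}[X_m,\hat C_j]$, i.e. exactly the double-sum terms with $l=m$. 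Adding the double-sum terms with $l\leqslant m-1$ to those with $l=m$ reconstitutes $\sum_{1\leqslant k<l\leqslant m}$, completing the induction.

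Since \eqref{lem:LeibnizTypeTwoFold} is quoted from the literature, if a self-contained argument is preferred one can first record it by a direct computation from \eqref{eq:GKSLHeisenberg}: split $\mathcal{L}^*=i[\hat H,\,\cdot\,]+\mathcal{D}^*$ with $\mathcal{D}^*(X)=\tfrac12\sum_j([\hat C_j^{\dagger},X]\hat C_j+\hat C_j^{\dagger}[X,\hat C_j])$, observe that the Hamiltonian part obeys the plain Leibniz rule and hence contributes only to the first sum of \eqref{eq:LeibnizType} (the $\hat H$-free correction terms being untouched), and verify the $m=2$ identity for $\mathcal{D}^*$ by expanding both sides; the induction above then goes through verbatim with $\mathcal{D}^*$ in place of $\mathcal{L}^*$. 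The argument is essentially bookkeeping, and the only real care needed is in checking that the ``new'' pairs appearing at step $m$ — those with $l=m$ — are produced entirely by the genuinely dissipative term $\sum_j[\hat C_j^{\dagger},X_1\cdots X_{m-1}][X_m,\hat C_j]$ of \eqref{lem:LeibnizTypeTwoFold}, with no overlap with the induction hypothesis (which only ever involves indices up to $m-1$). I expect this index matching to be the main, though modest, obstacle.
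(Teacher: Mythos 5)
Your induction is exactly the paper's argument: apply the two-fold formula \eqref{lem:LeibnizTypeTwoFold} with $X$ the product of all but the last factor and $Y$ the last factor, invoke the induction hypothesis on the middle term, and expand $[\hat C_j^{\dagger},X_1\cdots X_{m-1}]$ by the ordinary commutator Leibniz rule to produce precisely the new pairs with $l=m$. The bookkeeping is carried out correctly, so the proposal is correct and essentially identical to the paper's proof.
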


	\begin{proof}
		Let us prove formula \eqref{eq:LeibnizType} by induction. Eq.~\eqref{lem:LeibnizTypeTwoFold} is the base of induction. Let us assume that \eqref{eq:LeibnizType} is proved for $ m $ and let us prove it for $ m + 1 $. First of all let us apply \eqref{lem:LeibnizTypeTwoFold} assuming $ X = X_1 \ldots X_{m} $ and $ Y = X_{m+1} $, then we have
		\begin{equation*}
			\mathcal{L}^*(X_1 \ldots X_{m+1}) = X_1 \ldots X_{m} \mathcal{L}^*(X_{m+1}) + \mathcal{L}^*(X_1 \ldots X_{m} ) X_{m+1}  +  \sum_j [\hat{C}_j^{\dagger}, X_1 \ldots X_{m}][X_{m+1}, \hat{C}_j ].
		\end{equation*}
		Now let us apply Eq.~\eqref{eq:LeibnizType} for the $ m $-fold case and the Leibniz rule for a commutator, then we have
		\begin{align*}
			\mathcal{L}^*(X_1 \ldots X_{m+1}) =  X_1 \ldots X_{m} \mathcal{L}^*(X_{m+1}) + \sum_{k=1}^m X_1 \ldots \mathcal{L}^*(X_{k}) \ldots X_{m} X_{m+1}\\
			+ \sum_{1 \leqslant k <l \leqslant m}\sum_j X_1 \ldots X_{k-1} [\hat{C}_j^{\dagger}, X_k ]  X_{k+1} \ldots X_{l-1} [X_{l}, \hat{C}_j ] X_{l+1} \ldots X_m X_{m+1}\\
			+ \sum_{k=1}^m \sum_j X_{1} \ldots X_{k-1} [\hat{C}_j^{\dagger}, X_k ] X_{k+1} \ldots X_{m}[X_{m+1}, \hat{C}_j ]\\
			= \sum_{k=1}^{m+1} X_1 \ldots X_{k-1}\mathcal{L}^*(X_{k}) X_{k+1} \ldots X_{m+1}  \\
			+  \sum_{1 \leqslant k <l \leqslant m+1}\sum_j X_1 \ldots X_{k-1} [\hat{C}_j^{\dagger}, X_k ]  X_{k+1} \ldots X_{l-1} [X_{l}, \hat{C}_j ] X_{l+1} \ldots X_{m + 1}. 
		\end{align*}
		Thus, we obtain Eq.~\eqref{eq:LeibnizType} for the $ (m+1) $-fold case.
	\end{proof}
	
	\section{Heisenberg equations}
	\label{sec:Heisenberg}
	
	We need to briefly present some notation from \cite{Ter19, Ter16, Ter17a, Ter19R} to formulate our result. In this section we consider Hilbert space $\otimes_{j=1}^n\ell_2$.  Let us define the $2n$-dimensional vector of annihilation and creation operators $\mathfrak{a} = (\hat{a}_1, \ldots, \hat{a}_n, \hat{a}_1^{\dagger}, \ldots, \hat{a}_n^{\dagger} )^T$ satisfying canonical commutation relations \cite[Sec. 1.1.2]{scalli2003} $ [\hat{a}_i, \hat{a}_j^{\dagger}] = \delta_{ij}$, $ [\hat{a}_i, \hat{a}_j] = [\hat{a}_i^{\dagger}, \hat{a}_j^{\dagger}]= 0 $. One could denote linear and quadratic forms in such operators by $ f^T \mathfrak{a}  $ and $ \mathfrak{a}^T K \mathfrak{a} $, respectively. Here $  f \in \mathbb{C}^{2n} $ and $ K \in \mathbb{C}^{2n \times 2n} $. Let us define $2n \times 2n$-dimensional matrices as
	\begin{equation*}
		J = \biggl(
		\begin{array}{cc}
			0 & -I_n \\ 
			I_n & 0
		\end{array} 
		\biggr), \qquad
		E = \biggl(
		\begin{array}{cc}
			0 & I_n \\ 
			I_n & 0
		\end{array} 
		\biggr),
	\end{equation*}
	where $ I_n $ --- identity matrix from $ \mathbb{C}^{n \times n} $. Then the canonical commutation relations take the form
	\begin{equation}\label{eq:commRel}
		[f^T \mathfrak{a},\mathfrak{a}^T g]  = - f^T J g, \qquad \forall g, f \in \mathbb{C}^{2n}.
	\end{equation}
	
	We also define the $\sim$-conjugation of vectors and matrices by the formulae
	\begin{equation*}
		\tilde{g} = E\overline{g}, \; g \in \mathbb{C}^{2n}, \qquad \tilde{K} = E \overline{K} E, \; K \in \mathbb{C}^{2n \times 2n},
	\end{equation*}
	where the overline is an (elementwise) complex conjugation.
	
	\begin{lemma}
		For $ \hat{H} = \frac{1}{2} \mathfrak{a}^T H \mathfrak{a} + f^T \mathfrak{a}  $, where $ H = H^T = \tilde{H} $ ,  $ f = \tilde{f} $,  and $ \hat{C}_j = \gamma_j^T \mathfrak{a}  $  Eq.~\eqref{eq:GKSLHeisenberg} takes the form
		\begin{equation}\label{eq:quadGenHeis}
			\mathcal{L}_{H, \Gamma, f}^*(X) \equiv i\left[\frac{1}{2} \mathfrak{a}^T H \mathfrak{a} + f^T \mathfrak{a} , X \right] +  \mathfrak{a}^T X \Gamma^T \mathfrak{a}  - \frac{1}{2} \mathfrak{a}^T \Gamma^T \mathfrak{a} X  - \frac{1}{2} X \; \mathfrak{a}^T \Gamma^T \mathfrak{a} ,
		\end{equation}
		where $ \Gamma \equiv \sum\limits_j \gamma_j \tilde{\gamma}_j^T $.
	\end{lemma}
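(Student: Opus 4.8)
The plan is to substitute the stated forms of $\hat{H}$ and $\hat{C}_j$ directly into \eqref{eq:GKSLHeisenberg} and reorganize the resulting sums into matrix form. First I would rewrite the dissipative part of \eqref{eq:GKSLHeisenberg} in its equivalent standard shape: expanding the commutators gives $\frac12\sum_j([\hat{C}_j^{\dagger},X]\hat{C}_j+\hat{C}_j^{\dagger}[X,\hat{C}_j]) = \sum_j(\hat{C}_j^{\dagger}X\hat{C}_j - \tfrac12\hat{C}_j^{\dagger}\hat{C}_j X - \tfrac12 X\hat{C}_j^{\dagger}\hat{C}_j)$. The Hamiltonian term $i[\hat{H},X]$ is simply carried over with $\hat{H}=\tfrac12\mathfrak{a}^T H\mathfrak{a}+f^T\mathfrak{a}$; the hypotheses $H=H^T=\tilde{H}$ and $f=\tilde{f}$ are needed only so that $\hat{H}$ is (formally) self-adjoint and \eqref{eq:GKSLHeisenberg} is genuinely the adjoint of a GKSL generator, and they play no role in the algebraic identity itself.

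The key preliminary observation is that $\hat{C}_j^{\dagger}=\tilde{\gamma}_j^T\mathfrak{a}$. I would verify this straight from the definitions: writing $\hat{C}_j=\gamma_j^T\mathfrak{a}=\sum_{i=1}^n(\gamma_j)_i\hat{a}_i+\sum_{i=1}^n(\gamma_j)_{n+i}\hat{a}_i^{\dagger}$, taking the adjoint and relabelling shows that the coefficient vector of $\hat{C}_j^{\dagger}$ in the ordered basis $\mathfrak{a}$ is exactly $E\overline{\gamma_j}=\tilde{\gamma}_j$, i.e. $\hat{C}_j^{\dagger}=\tilde{\gamma}_j^T\mathfrak{a}$.

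Then the two operator-valued pieces become quadratic-form-like expressions. Since the entries of $\gamma_j$ and $\tilde{\gamma}_j$ are scalars, they commute with $X$ and with each other, so they can be pulled out of the products:
\[
\sum_j \hat{C}_j^{\dagger} X \hat{C}_j \;=\; \sum_j (\tilde{\gamma}_j^T\mathfrak{a})\, X\, (\gamma_j^T\mathfrak{a}) \;=\; \sum_{k,l}\Bigl(\sum_j (\tilde{\gamma}_j)_k (\gamma_j)_l\Bigr)\,\mathfrak{a}_k\, X\, \mathfrak{a}_l .
\]
By the definition $\Gamma=\sum_j\gamma_j\tilde{\gamma}_j^T$ one has $\sum_j (\tilde{\gamma}_j)_k (\gamma_j)_l=\sum_j (\gamma_j)_l (\tilde{\gamma}_j)_k=\Gamma_{lk}=(\Gamma^T)_{kl}$, hence $\sum_j \hat{C}_j^{\dagger} X \hat{C}_j=\mathfrak{a}^T X \Gamma^T\mathfrak{a}$. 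Taking $X=\hat{1}$ (or repeating the computation without the middle factor) gives $\sum_j \hat{C}_j^{\dagger}\hat{C}_j=\mathfrak{a}^T \Gamma^T\mathfrak{a}$. Inserting these three identities into the rewritten dissipative part, together with the unchanged Hamiltonian term, yields \eqref{eq:quadGenHeis}.

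I do not expect a genuine obstacle: the proof is a direct substitution. The only point requiring care is the index bookkeeping — tracking the action of $E$ in the $\sim$-conjugation, keeping the transpose in $\Gamma^T$, and preserving the operator ordering, since $X$ does not commute with $\mathfrak{a}$ (this is precisely why $X$ must remain sandwiched between the two linear forms and cannot be collected into a single quadratic form in $\mathfrak{a}$).
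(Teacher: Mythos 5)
Your proposal is correct: the paper states this lemma without proof, and your direct substitution --- rewriting the dissipator as $\sum_j(\hat{C}_j^{\dagger}X\hat{C}_j-\tfrac12\hat{C}_j^{\dagger}\hat{C}_jX-\tfrac12X\hat{C}_j^{\dagger}\hat{C}_j)$, identifying $\hat{C}_j^{\dagger}=\tilde{\gamma}_j^T\mathfrak{a}$, and collecting the scalar coefficients into $(\Gamma^T)_{kl}$ --- is exactly the computation the authors implicitly rely on. Your side remarks (that $H=H^T=\tilde H$, $f=\tilde f$ only ensure self-adjointness of $\hat H$, and that $X$ must stay sandwiched between the two linear forms) are also accurate.
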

	
	Let us formulate a part of lemma 10 from \cite{Ter19R} which is needed for purposes of this paper.
	\begin{lemma}\label{lem:applGenToFistOrder}
		Let $ \mathcal{L}_{H, \Gamma, f}^* $ be defined by \eqref{eq:quadGenHeis}, then 
		\begin{equation}\label{eq:linearChange}
			\mathcal{L}_{H, \Gamma, f}^* ( \mathfrak{a} ) = B  \mathfrak{a} + \varphi,
		\end{equation}
		where
		\begin{equation}\label{eq:BPhiDef}
			B \equiv J \left(i H + \frac{\Gamma^T - \Gamma}{2}\right), \qquad \varphi \equiv i J f
		\end{equation}
		and $ \mathcal{L}_{H, \Gamma, f}^* ( \mathfrak{a} ) $ is understood as application of $ \mathcal{L}_{H, \Gamma, f}^* $ to each element of $ \mathfrak{a} $.
	\end{lemma}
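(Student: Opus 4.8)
The plan is to evaluate the right-hand side of \eqref{eq:quadGenHeis} with $X$ equal to the $k$-th component $(\mathfrak{a})_k$ of $\mathfrak{a}$, for each $k=1,\dots,2n$, and then to read off the $2n$ resulting scalar identities as a single vector identity. I write $(\mathfrak{a})_k=e_k^T\mathfrak{a}=\mathfrak{a}^T e_k$ with $e_k$ the $k$-th standard basis vector of $\mathbb{C}^{2n}$, so that \eqref{eq:commRel} specializes to $[(\mathfrak{a})_p,(\mathfrak{a})_q]=-J_{pq}$; the remaining ingredients are the Leibniz rule $[AB,C]=A[B,C]+[A,C]B$ and the identities $J^T=-J$ and $H=H^T$. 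The structural reason the claim holds is that $\mathcal{L}_{H,\Gamma,f}^*$ maps a linear form to a linear form plus a constant: the commutator of a quadratic form with a linear form is linear, that of two linear forms is a scalar, and in the dissipative part each cubic monomial $(\mathfrak{a})_p(\mathfrak{a})_k(\mathfrak{a})_q$, taken in the particular combination occurring in \eqref{eq:quadGenHeis}, collapses to single commutators and hence to scalars times single operators.

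First I would treat the Hamiltonian part $i[\tfrac12\mathfrak{a}^T H\mathfrak{a}+f^T\mathfrak{a},(\mathfrak{a})_k]$. By \eqref{eq:commRel}, $[f^T\mathfrak{a},(\mathfrak{a})_k]=-f^T J e_k=(Jf)_k$, so the linear term produces $i[f^T\mathfrak{a},\mathfrak{a}]=iJf=\varphi$. For the quadratic piece, expanding $\mathfrak{a}^T H\mathfrak{a}=\sum_{p,q}H_{pq}(\mathfrak{a})_p(\mathfrak{a})_q$ and using the Leibniz rule together with $[(\mathfrak{a})_p,(\mathfrak{a})_q]=-J_{pq}$ gives
\[
  [\mathfrak{a}^T H\mathfrak{a},(\mathfrak{a})_k]=-\sum_{p,q}H_{pq}\big(J_{qk}(\mathfrak{a})_p+J_{pk}(\mathfrak{a})_q\big);
\]
rewriting each index sum as a matrix--vector product and invoking $H=H^T$ and $J^T=-J$, both terms equal $(JH\mathfrak{a})_k$, so $[\tfrac12\mathfrak{a}^T H\mathfrak{a},(\mathfrak{a})_k]=(JH\mathfrak{a})_k$ and the Hamiltonian part contributes $iJH\mathfrak{a}+iJf$ to $\mathcal{L}_{H,\Gamma,f}^*(\mathfrak{a})$.

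Next the dissipative part. Substituting $X=(\mathfrak{a})_k$ into $\mathfrak{a}^T X\Gamma^T\mathfrak{a}-\tfrac12\mathfrak{a}^T\Gamma^T\mathfrak{a}\,X-\tfrac12 X\,\mathfrak{a}^T\Gamma^T\mathfrak{a}$ and expanding in components with $\mathfrak{a}^T\Gamma^T\mathfrak{a}=\sum_{p,q}(\Gamma^T)_{pq}(\mathfrak{a})_p(\mathfrak{a})_q$, the cubic bracket reorganizes as
\begin{multline*}
  (\mathfrak{a})_p(\mathfrak{a})_k(\mathfrak{a})_q-\tfrac12(\mathfrak{a})_p(\mathfrak{a})_q(\mathfrak{a})_k-\tfrac12(\mathfrak{a})_k(\mathfrak{a})_p(\mathfrak{a})_q \\
  =\tfrac12(\mathfrak{a})_p[(\mathfrak{a})_k,(\mathfrak{a})_q]+\tfrac12[(\mathfrak{a})_p,(\mathfrak{a})_k](\mathfrak{a})_q=-\tfrac12 J_{kq}(\mathfrak{a})_p-\tfrac12 J_{pk}(\mathfrak{a})_q.
\end{multline*}
Summing against $(\Gamma^T)_{pq}$ and again converting to matrix--vector form (now using only $J^T=-J$) yields $-\tfrac12 J\Gamma\mathfrak{a}+\tfrac12 J\Gamma^T\mathfrak{a}=J\,\tfrac{\Gamma^T-\Gamma}{2}\,\mathfrak{a}$. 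Adding the Hamiltonian and dissipative contributions and factoring $J$ on the left gives $\mathcal{L}_{H,\Gamma,f}^*(\mathfrak{a})=J\big(iH+\tfrac{\Gamma^T-\Gamma}{2}\big)\mathfrak{a}+iJf=B\mathfrak{a}+\varphi$, which is \eqref{eq:linearChange}--\eqref{eq:BPhiDef}.

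The computation is routine; the only point demanding care is the repeated passage from an index sum such as $\sum_p M_{pk}(\mathfrak{a})_p$ to $(M^T\mathfrak{a})_k$, where one must keep track of which index is summed, of the transposition of matrix products, and of the sign in $J^T=-J$. The one hypothesis genuinely used is $H=H^T$, needed to identify the two quadratic commutator sums; no symmetry or reality assumption on $\Gamma$ enters, which is exactly why only its antisymmetric part $\tfrac{\Gamma^T-\Gamma}{2}$ survives in $B$.
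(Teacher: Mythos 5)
Your computation is correct. The paper itself gives no proof of this lemma --- it simply quotes it as part of Lemma 10 of the cited review \cite{Ter19R} --- so your direct component-wise evaluation supplies the missing argument, and it is the standard one: each step (the specialization $[(\mathfrak{a})_p,(\mathfrak{a})_q]=-J_{pq}$, the reduction of the cubic combination to $\tfrac12(\mathfrak{a})_p[(\mathfrak{a})_k,(\mathfrak{a})_q]+\tfrac12[(\mathfrak{a})_p,(\mathfrak{a})_k](\mathfrak{a})_q$, and the index bookkeeping with $J^T=-J$, $H=H^T$) checks out, and your closing remark that only the antisymmetric part of $\Gamma$ can survive is an accurate observation.
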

	
	Now let us generalize it to the $ m $-fold case.
	
	\begin{lemma}
		Let $ g_1, \ldots, g_m \in \mathbb{C}^{2n} $, then
		\begin{align}
			\mathcal{L}_{H, \Gamma, f}^*(g_1^T \mathfrak{a} \ldots g_m^T \mathfrak{a}) &=\sum_{k=1}^m g_1^T \mathfrak{a}\ldots g_{k-1}^T \mathfrak{a} g_k^T (B\mathfrak{a} + \varphi) g_{k+1}^T \mathfrak{a} \ldots g_m^T \mathfrak{a} \nonumber\\
			& +  \sum_{1 \leqslant k <l \leqslant m}  (g_{k}^T \Xi g_{l}) g_1^T \mathfrak{a}  \ldots g_{k-1}^T\mathfrak{a}      g_{k+1}^T\mathfrak{a}\ldots  g_{l-1}^T\mathfrak{a}  g_{l+1}^T\mathfrak{a} \ldots  g_{m}^T\mathfrak{a}, \label{eq:applQuadGenToProduct}
		\end{align}
		where
		\begin{equation}\label{eq:XiDef}
			\Xi \equiv J \Gamma^T J.
		\end{equation}
	\end{lemma}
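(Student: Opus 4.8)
The plan is to prove \eqref{eq:applQuadGenToProduct} by combining the Leibniz type formula from Lemma~\ref{lem:LeibnizType} with the first-order result of Lemma~\ref{lem:applGenToFistOrder}. Apply Lemma~\ref{lem:LeibnizType} with $X_k = g_k^T\mathfrak{a}$, so that $\mathcal{L}_{H,\Gamma,f}^*(g_1^T\mathfrak{a}\ldots g_m^T\mathfrak{a})$ splits into a ``diagonal'' sum over $k$ of the terms $g_1^T\mathfrak{a}\ldots \mathcal{L}_{H,\Gamma,f}^*(g_k^T\mathfrak{a})\ldots g_m^T\mathfrak{a}$ and a ``double'' sum over $k<l$ of the correction terms built from the commutators $[\hat C_j^\dagger, g_k^T\mathfrak{a}]$ and $[g_l^T\mathfrak{a},\hat C_j]$. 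For the diagonal sum, note that $\mathcal{L}_{H,\Gamma,f}^*$ acts on $g_k^T\mathfrak{a}$ entrywise as in \eqref{eq:linearChange}, giving $\mathcal{L}_{H,\Gamma,f}^*(g_k^T\mathfrak{a}) = g_k^T(B\mathfrak{a}+\varphi)$, which is exactly the first line of \eqref{eq:applQuadGenToProduct}.

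The remaining work is to identify the correction sum. With $\hat C_j = \gamma_j^T\mathfrak{a}$ one has $\hat C_j^\dagger = \tilde\gamma_j^T\mathfrak{a}$, so by the commutation relation \eqref{eq:commRel}, $[\hat C_j^\dagger, g_k^T\mathfrak{a}] = [\tilde\gamma_j^T\mathfrak{a}, \mathfrak{a}^T g_k] = -\tilde\gamma_j^T J g_k$ and $[g_l^T\mathfrak{a}, \hat C_j] = [g_l^T\mathfrak{a}, \mathfrak{a}^T\gamma_j] = -g_l^T J \gamma_j$, both scalars. Hence each $(k,l)$ term contributes $\bigl(\sum_j (\tilde\gamma_j^T J g_k)(g_l^T J \gamma_j)\bigr)$ times the product of the remaining $g_i^T\mathfrak{a}$'s. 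It then remains to simplify the scalar coefficient: $\sum_j (\tilde\gamma_j^T J g_k)(g_l^T J\gamma_j) = g_k^T J^T\bigl(\sum_j \tilde\gamma_j\gamma_j^T\bigr) J g_l = g_k^T J^T \Gamma^T J g_l$, where I used $\Gamma = \sum_j \gamma_j\tilde\gamma_j^T$ so that $\Gamma^T = \sum_j \tilde\gamma_j\gamma_j^T$, and then $J^T = -J$ gives $g_k^T J\,\Gamma^T\! J\, g_l = g_k^T \Xi g_l$ with $\Xi = J\Gamma^T J$ as in \eqref{eq:XiDef}. (The two minus signs from $J^T=-J$ cancel.)

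The main obstacle is purely bookkeeping: carefully tracking the transposes and the sign conventions in \eqref{eq:commRel} so that the scalar coefficient comes out as $g_k^T\Xi g_l$ rather than $g_l^T\Xi g_k$ or with a wrong sign — one must be attentive that the two commutators are evaluated with $\mathfrak{a}$ on opposite sides, and that $\Xi$ is not symmetric in general, so the order of $k$ and $l$ matters and must match the ordering $k<l$ inherited from Lemma~\ref{lem:LeibnizType}. Once the coefficient is pinned down, pulling the scalar out in front of the surviving operator product and relabeling the indices yields exactly the second line of \eqref{eq:applQuadGenToProduct}, completing the proof.
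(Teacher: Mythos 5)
Your argument is correct and follows essentially the same route as the paper: apply the Leibniz type formula of Lemma~\ref{lem:LeibnizType} with $X_k = g_k^T\mathfrak{a}$, evaluate the diagonal terms via Lemma~\ref{lem:applGenToFistOrder}, and reduce each correction term to the scalar $g_k^T\Xi g_l$ times the remaining product by means of the canonical commutation relations \eqref{eq:commRel}. The only blemish is a typo in your intermediate expression, which should read $g_k^T J^T\bigl(\sum_j\tilde\gamma_j\gamma_j^T\bigr)J^T g_l$ rather than $g_k^T J^T\bigl(\sum_j\tilde\gamma_j\gamma_j^T\bigr)J g_l$ (both scalar factors transpose to produce a $J^T$); your parenthetical remark that the two minus signs from $J^T=-J$ cancel shows this is what you intended, and the final coefficient $g_k^T\Xi g_l$ with $\Xi=J\Gamma^T J$ is correct.
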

	
	\begin{proof}
		By lemmas \ref{lem:LeibnizType} and \ref{lem:applGenToFistOrder} and taking into account \eqref{eq:commRel} we have
		\begin{align*}
			\mathcal{L}^*(g_1^T \mathfrak{a} \ldots g_m^T \mathfrak{a}) =& \sum_{k=1}^m g_1^T \mathfrak{a}\ldots g_{k-1}^T \mathfrak{a} \mathcal{L}^*(g_k^T \mathfrak{a}) g_{k+1}^T \mathfrak{a} \ldots g_m^T \mathfrak{a} \\
			&+  \sum_{1 \leqslant k <l \leqslant m}\sum_i g_1^T \mathfrak{a}  \ldots g_{k-1}^T\mathfrak{a}  [\hat{C}_i^{\dagger},  g_{k}^T\mathfrak{a}]   g_{k+1}^T\mathfrak{a}\ldots  g_{l-1}^T\mathfrak{a} [ g_{l}^T\mathfrak{a}, \hat{C}_i ]  g_{l+1}^T\mathfrak{a} \ldots  g_{m}^T\mathfrak{a} \\
			=& \sum_{k=1}^m g_1^T \mathfrak{a}\ldots g_{k-1}^T \mathfrak{a} g_k^T (B\mathfrak{a} + \varphi) g_{k+1}^T \mathfrak{a} \ldots g_m^T \mathfrak{a}\\
			& +  \sum_{1 \leqslant k <l \leqslant m}\sum_i g_1^T \mathfrak{a}  \ldots g_{k-1}^T\mathfrak{a}   (g_{k}^T J \tilde{\gamma}_i)   g_{k+1}^T\mathfrak{a}\ldots  g_{l-1}^T\mathfrak{a}  (\gamma_i^T J g_{l}) g_{l+1}^T\mathfrak{a} \ldots  g_{m}^T\mathfrak{a} \\
			=& \sum_{k=1}^m g_1^T \mathfrak{a}\ldots g_{k-1}^T \mathfrak{a} g_k^T (B\mathfrak{a} + \varphi) g_{k+1}^T \mathfrak{a} \ldots g_m^T \mathfrak{a}\\
			& +  \sum_{1 \leqslant k <l \leqslant m}  (g_{k}^T \Xi g_{l}) g_1^T \mathfrak{a}  \ldots g_{k-1}^T\mathfrak{a}      g_{k+1}^T\mathfrak{a}\ldots  g_{l-1}^T\mathfrak{a}  g_{l+1}^T\mathfrak{a} \ldots  g_{m}^T\mathfrak{a}.
		\end{align*}
		Thus, we obtain \eqref{eq:applQuadGenToProduct}.
	\end{proof}
	
	Let us write in more compact notation. One can think about $ g_1^T \mathfrak{a} \ldots g_m^T \mathfrak{a} $ as components of tensor $ \mathfrak{a} \otimes \ldots \otimes \mathfrak{a} $. Similarly to many-body physics set-ups \cite[Subsec. 3.7.2]{Breuer2002} let us use the subscripts for $ \mathfrak{a} $, $ \varphi $, $ \Xi $  to denote the number of the tensor multiplicand to which it corresponds, e.g.  $ \mathfrak{a}_{1} \mathfrak{a}_{2}  \equiv \mathfrak{a} \otimes \mathfrak{a} $, $ \mathfrak{a}_{1} \varphi_{2}  \equiv \mathfrak{a} \otimes \varphi $,  $ \Xi_{12} \mathfrak{a}_3 \equiv \Xi \otimes \mathfrak{a} $ and so on. Similarly, the subscript for $ B $ denotes the number of the tensor multiplicand in which this matrix  acts. Then, Eq. \eqref{eq:applQuadGenToProduct} takes the form
	\begin{equation*}
		\mathcal{L}_{H, \Gamma, f}^* \biggl( \prod_{j=1}^{m} \mathfrak{a}_j \biggr) = \left(\sum_{k=1}^m  B_k\right) \prod_{j=1}^{m} \mathfrak{a}_j  + \sum_{k=1}^m  \varphi_k \prod_{\substack{j=1\\ j \neq k}}^{m} \mathfrak{a}_j  + \sum_{1 \leqslant k <l \leqslant m}  \Xi_{kl}\prod_{\substack{j=1\\ j \neq k, l}}^{m} \mathfrak{a}_j.
	\end{equation*}
	But let us make our notation even more compact by introducing set-valued subscripts
	\begin{equation*}
		\mathfrak{a}_{I} \equiv \prod_{j \in I} \mathfrak{a}_j,
	\end{equation*}
	where $ I $ is some subset of a natural number. Similarly, let us denote by $ p $ a set of two natural numbers (pairs) and let $ P(I) $ be all possible pairs from $ I $. Then  Eq. \eqref{eq:applQuadGenToProduct}  takes the form 
	\begin{equation*}
		\mathcal{L}_{H, \Gamma, f}^* ( \mathfrak{a}_I) = \biggl(\sum_{k \in I}  B_k \biggr) \mathfrak{a}_I + \sum_{k \in I}  \varphi_k \mathfrak{a}_{I\setminus \{k\}} + \sum_{p \in P(I)}  \Xi_{p}  \mathfrak{a}_{I \setminus p}.
	\end{equation*}
	
	Let us consider the simplest cases. Namely, for $ I = \{1\} $ we revisit \eqref{eq:linearChange}. For $ I = \{1,2\} $ we have
	\begin{equation*}
		\mathcal{L}_{H, \Gamma, f}^* ( \mathfrak{a}_{12}) =(B_1 + B_2) \mathfrak{a}_{12} + \varphi_1 \mathfrak{a}_{2} + \varphi_2 \mathfrak{a}_{1}  +   \Xi_{12}  .
	\end{equation*}
	If one arranges the elements of the tensor $ \mathfrak{a}_{12} \equiv \mathfrak{a}_1 \mathfrak{a}_2  \equiv \mathfrak{a} \otimes \mathfrak{a} $ into the matrix $ \mathfrak{a}  \mathfrak{a}^T $, then this equation takes the form
	\begin{equation*}
		\mathcal{L}_{H, \Gamma, f}^* ( \mathfrak{a}  \mathfrak{a}^T) =  B \mathfrak{a}  \mathfrak{a}^T +  \mathfrak{a}  \mathfrak{a}^T B^T + \varphi \mathfrak{a}^T +   \mathfrak{a} \varphi^T  +   \Xi.
	\end{equation*}
	Taking into account \eqref{eq:BPhiDef} and \eqref{eq:XiDef} we have
	\begin{align*}
		\mathcal{L}_{H, \Gamma, f}^* \left( \mathfrak{a} \; \mathfrak{a}^T \right) =  J \left(i H + \frac{\Gamma^T - \Gamma}{2} \right) \mathfrak{a} \; \mathfrak{a}^T + \mathfrak{a} \; \mathfrak{a}^T \left(-i H + \frac{\Gamma^T - \Gamma}{2} \right) J + J \Gamma^T J 
		+ i J f \; \mathfrak{a}^T - i \mathfrak{a} \; f^T J,
	\end{align*}
	which coincides with the other part of lemma 10 from \cite{Ter19R}.
	
	Then the Heisenberg equations for generator \eqref{eq:quadGenHeis} with possibly time-dependent coefficients $ H(t) $, $ f(t) $, $ \Gamma(t) $
	\begin{equation}\label{eq:Heisenberg}
		\frac{d}{dt}\mathfrak{a}_{I} (t)  =\biggl(\sum_{k \in I}  B_k(t) \biggr) \mathfrak{a}_I(t) + \sum_{k \in I}  \varphi_k(t) \mathfrak{a}_{I\setminus \{k\}}(t) + \sum_{p \in P(I)}  \Xi_{p}(t)  \mathfrak{a}_{I \setminus p}(t),
	\end{equation}
	where definitions \eqref{eq:BPhiDef}, \eqref{eq:XiDef} hold the same as for time-independent $ H(t) $, $ f(t) $, $ \Gamma(t) $.
	
	\section{Solution of Heisenberg equations}
	\label{sec:HeisSol}
	
	In this section we solve Eqs.~\eqref{eq:Heisenberg}.
	
	\begin{theorem}\label{th:HeisSol}
		Let $ G(t) $ be a solution of the Cauchy problem
		\begin{equation}\label{eq:propGtEq}
			\frac{d}{dt} G(t) = B(t)  G(t), \qquad G(0) = I_{2n},
		\end{equation}
		and
		\begin{equation}\label{eq:psiGtDef}
			\psi(t)= \int_0^t d \tau \;   (G(\tau) )^{-1} \varphi(\tau), \qquad \beta_{p}(t) = \int_0^t d \tau \;  (G_p(\tau) )^{-1}  \Xi_p(\tau),
		\end{equation}
		where $ p $ is a pair of indices (always in ascending order). Similarly to the previous section a natural subscript $ k  $ for $ \psi_{k}(t) $ and $ G_{k}(t) $ means the number of the tensor multiplicand to which this vector or matrix corresponds. And similarly for  set-valued indices we define
		\begin{equation}\label{eq:psiGtBetaIDef}
			\psi_{I}(t) \equiv \prod_{k \in I} \psi_k(t), \qquad G_{I}(t) \equiv \prod_{k \in I} G_k(t), \qquad \beta_{I} (t) \equiv \sum_{I =p_1 \sqcup \ldots \sqcup p_{|I|/2}} \beta_{p_1}(t) \ldots \beta_{p_{|I|/2}}(t),
		\end{equation}
		where the last sum runs over all pairings of indices for $ I $ if $ |I| $ is even and equals zero for odd $ |I| $.
		
		Then the solution of \eqref{eq:Heisenberg} takes the form
		\begin{equation}\label{eq:HeisenbergSolution}
			\mathfrak{a}_{I} (t) = G_{I}(t) \sum_{I = I_1 \sqcup I_2  \sqcup I_3 } \psi_{I_1} (t) \beta_{I_2} (t) \mathfrak{a}_{I_3} (0) ,
		\end{equation}
		where the sum runs over all possible expansion of $ I $ in a disjoint union of three sets $ I_1, I_2, I_3 $. 
	\end{theorem}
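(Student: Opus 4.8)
The plan is to solve \eqref{eq:Heisenberg} by induction on $|I|$, exploiting the fact that the system is triangular: the right-hand side of the equation for $\mathfrak{a}_I$ contains, besides $\mathfrak{a}_I$ itself, only the terms $\mathfrak{a}_{I\setminus\{k\}}$ and $\mathfrak{a}_{I\setminus p}$, which have strictly fewer legs. For $|I|\leqslant 1$ the claim reduces to $\mathfrak{a}_\emptyset\equiv 1$ and to the solution $\mathfrak{a}(t)=G(t)(\mathfrak{a}(0)+\psi(t))$ of the equation $\dot{\mathfrak{a}}=B\mathfrak{a}+\varphi$ coming from \eqref{eq:linearChange}, which is immediate from the definition \eqref{eq:psiGtDef} of $\psi$.

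First I would check that $G_I(t)=\prod_{k\in I}G_k(t)$ is the fundamental solution of the homogeneous part $\dot Y=\bigl(\sum_{k\in I}B_k\bigr)Y$, $Y(0)=I$. Since $B_k$ and $G_k$ act nontrivially only in the $k$-th tensor leg, they commute with every $G_j$, $j\neq k$, so differentiating the product and using $\dot G_k=B_kG_k$ gives $\frac{d}{dt}G_I=\bigl(\sum_{k\in I}B_k\bigr)G_I$; invertibility of $G_I(t)$ follows from invertibility of each $G_k(t)$. Then, writing $\mathfrak{a}_I(t)=G_I(t)c_I(t)$ (variation of parameters) and inserting the inductive formulae $\mathfrak{a}_{I\setminus\{k\}}=G_{I\setminus\{k\}}c_{I\setminus\{k\}}$, $\mathfrak{a}_{I\setminus p}=G_{I\setminus p}c_{I\setminus p}$, the same disjoint-leg commutation collapses $G_I^{-1}\varphi_kG_{I\setminus\{k\}}=G_k^{-1}\varphi_k$ and $G_I^{-1}\Xi_pG_{I\setminus p}=G_p^{-1}\Xi_p$, i.e. by \eqref{eq:psiGtDef} these equal $\dot\psi_k(t)$ and $\dot\beta_p(t)$ respectively. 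Hence $c_I$ must solve the Cauchy problem
\[
\frac{d}{dt}c_I(t)=\sum_{k\in I}\dot\psi_k(t)\,c_{I\setminus\{k\}}(t)+\sum_{p\in P(I)}\dot\beta_p(t)\,c_{I\setminus p}(t),\qquad c_I(0)=\mathfrak{a}_I(0).
\]

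It then remains to verify that $c_I(t)=\sum_{I=I_1\sqcup I_2\sqcup I_3}\psi_{I_1}(t)\beta_{I_2}(t)\mathfrak{a}_{I_3}(0)$, the bracket in \eqref{eq:HeisenbergSolution}, satisfies this problem; together with $\mathfrak{a}_I=G_Ic_I$ and uniqueness of a linear initial value problem this yields \eqref{eq:HeisenbergSolution}. The initial condition is immediate: $\psi(0)=0$ and $\beta_p(0)=0$, so only the term with $I_1=I_2=\emptyset$, $I_3=I$ survives and $c_I(0)=\mathfrak{a}_I(0)$. For the differential equation one differentiates term by term, using that $\mathfrak{a}_{I_3}(0)$ is constant, that $\frac{d}{dt}\psi_{I_1}=\sum_{k\in I_1}\dot\psi_k\psi_{I_1\setminus\{k\}}$, and — from the pairing-sum definition \eqref{eq:psiGtBetaIDef} of $\beta_{I_2}$ — that $\frac{d}{dt}\beta_{I_2}=\sum_{q\in P(I_2)}\dot\beta_q\,\beta_{I_2\setminus q}$. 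Reindexing the resulting double sums by pulling the distinguished index $k$ out of $I_1$ (respectively the distinguished pair $q$ out of $I_2$) turns $\sum_{I=I_1\sqcup I_2\sqcup I_3}\sum_{k\in I_1}(\cdots)$ into $\sum_{k\in I}\dot\psi_k\,c_{I\setminus\{k\}}$ and the $\beta$-part into $\sum_{p\in P(I)}\dot\beta_p\,c_{I\setminus p}$, which is precisely the desired equation.

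The routine parts are the disjoint-leg commutation identities and the bookkeeping of set-valued subscripts. The hard part will be the combinatorial reindexing in the last step — in particular deriving $\frac{d}{dt}\beta_{I_2}=\sum_{q\in P(I_2)}\dot\beta_q\,\beta_{I_2\setminus q}$ from a sum over all perfect matchings of $I_2$, and then establishing the bijection between partitions $I=I_1\sqcup I_2\sqcup I_3$ carrying a marked element (or marked pair) and partitions of $I\setminus\{k\}$ (respectively $I\setminus p$) without over- or undercounting. Everything else reduces to a direct computation.
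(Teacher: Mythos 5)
Your proposal is correct and follows essentially the same route as the paper: the substitution $\mathfrak{a}_I = G_I c_I$ is exactly the paper's interaction-picture transform $\tilde{\mathfrak{a}}_I = G_I^{-1}\mathfrak{a}_I$, and the verification rests on the same two derivative identities $\frac{d}{dt}\psi_{I_1}=\sum_{k}\dot\psi_k\psi_{I_1\setminus\{k\}}$ and $\frac{d}{dt}\beta_{I_2}=\sum_{q}\dot\beta_q\beta_{I_2\setminus q}$ together with the same reindexing of marked decompositions $I=I_1\sqcup I_2\sqcup I_3$. The only cosmetic difference is that the paper integrates the recurrence and inducts on $|I|$ whereas you differentiate the candidate and invoke uniqueness; this changes nothing of substance.
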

	
	\begin{proof}
		Let us mention that
		\begin{equation*}
			\frac{d}{dt}G_{I}(t) =  \frac{d}{dt}\left( \prod_{l \in I} G_{l}(t)\right) =\sum_{k \in I}  B_{k}(t) \prod_{l \in I} G_{l}(t) = \left(\sum_{k \in I}  B_{k}(t)\right) G_{I}(t) .
		\end{equation*}
		Let us define
		\begin{equation}\label{eq:intPictureTransform}
			\tilde{\mathfrak{a}}_{I} (t) \equiv (G_{I}(t) )^{-1}	\mathfrak{a}_{I} (t), \qquad \tilde{\varphi}_{I} (t) \equiv (G_{I}(t) )^{-1}	\varphi_{I} (t), \qquad  \tilde{\Xi}_{kl}(t)  \equiv(G_{I}(t) )^{-1} \Xi_{kl}(t) .
		\end{equation}
		Then
		\begin{align*}
			\frac{d}{dt} \tilde{\mathfrak{a}}_{I} (t) =& - (G_{I}(t) )^{-1} \left(\frac{d}{dt} G_{I}(t)\right) (G_{I}(t) )^{-1}	\mathfrak{a}_{I} (t) +  (G_{I}(t) )^{-1} \frac{d}{dt} 	\mathfrak{a}_{I} (t)\\
			=& - (G_{I}(t) )^{-1}  \left(\sum_{k \in I}  B_{k}(t)\right) 	\mathfrak{a}_{I} (t) + (G_{I}(t) )^{-1}\biggl(\sum_{k \in I}  B_k(t) \biggr) \mathfrak{a}_I(t) \\
			&+ (G_{I}(t) )^{-1}\sum_{k \in I}  \varphi_k(t) \mathfrak{a}_{I\setminus \{k\}}(t) + (G_{I}(t) )^{-1} \sum_{p \in P(I)}  \Xi_{p}(t)  \mathfrak{a}_{I \setminus p}(t) \\
			=& \sum_{k \in I} (G_{k}(t) )^{-1} \varphi_k(t) (G_{I\setminus \{k\}}(t))^{-1} \mathfrak{a}_{I\setminus \{k\}}(t) +\sum_{p \in P(I)}   (G_{p}(t) )^{-1} \Xi_{p}(t)   (G_{I \setminus p}(t) )^{-1}  \mathfrak{a}_{I \setminus p}(t)\\
			=& \sum_{k \in I} \tilde{\varphi}_k(t) \tilde{\mathfrak{a}}_{I\setminus \{k\}}(t) +\sum_{p \in P(I)} \tilde{\Xi}_{p}(t)  \tilde{\mathfrak{a}}_{I \setminus p}(t)
		\end{align*}
		By integrating with respect to $ t $ we obtain a recurrent equation for $ \tilde{\mathfrak{a}}_{I} (t) $ (in terms of $ \tilde{\mathfrak{a}}_{I'} (t) $ with $ I' $ of lower cardinality than $ I $)
		\begin{equation}\label{eq:reccurentRel}
			\tilde{\mathfrak{a}}_{I} (t)  = \tilde{\mathfrak{a}}_{I} (0) + \sum_{k \in I} \int_0^t d \tau \;   \tilde{\varphi}_k(\tau) \tilde{\mathfrak{a}}_{I \setminus \{k\}} (\tau) + \sum_{p \in P(I)}  \int_0^t d \tau \;  \tilde{\Xi}_{p}(\tau) \tilde{\mathfrak{a}}_{I \setminus p} (\tau) .
		\end{equation}
		Let us prove that its solution has the form
		\begin{equation}\label{eq:tildeSolutions}
			\tilde{\mathfrak{a}}_{I} (t)  = \sum_{I = I_1 \sqcup I_2  \sqcup I_3 } \psi_{I_1} (t) \beta_{I_2} (t) \tilde{\mathfrak{a}}_{I_3} (0) 
		\end{equation}
		by induction. The base of induction
		\begin{equation*}
			\tilde{\mathfrak{a}}_{l} (t)  = \tilde{\mathfrak{a}}_{l} (0) + \int_0^t d \tau \;   \tilde{\varphi}_l(\tau) = \tilde{\mathfrak{a}}_{l} (0)  + \psi_{l} (t) = \sum_{\{l\} = I_1 \sqcup I_3 }  \psi_{I_1} (t) \tilde{\mathfrak{a}}_{I_3} (0).
		\end{equation*}
		So let \eqref{eq:tildeSolutions} be proved  for $ \tilde{\mathfrak{a}}_{I \setminus \{k\}} (t)  $ and $ \tilde{\mathfrak{a}}_{I \setminus p} (t) $ and let us prove it for $ \tilde{\mathfrak{a}}_{I} (t)  $. Namely, by Eq.~\eqref{eq:reccurentRel} we have
		\begin{align*}
			\tilde{\mathfrak{a}}_{I} (t)  = \tilde{\mathfrak{a}}_{I} (0) &+ \sum_{k \in I} \sum_{I\setminus \{k\} = I_1 \sqcup I_2  \sqcup I_3 } \int_0^t d \tau \;   \tilde{\varphi}_k(\tau)  \psi_{I_1 }(\tau) \beta_{I_2}(\tau) \tilde{\mathfrak{a}}_{I_3} (0) \\
			&+ \sum_{p \in P(I)}  \int_0^t d \tau \;  \tilde{\Xi}_{p}(\tau) \sum_{I\setminus p = I_1 \sqcup I_2  \sqcup I_3} \psi_{I_1} (\tau) \beta_{I_2} (\tau) \tilde{\mathfrak{a}}_{I_3} (0) .
		\end{align*}
		Let us remark that
		\begin{equation*}
			\frac{d}{dt} \psi_{I}(t) = \sum_{k \in I}  \tilde{\varphi}_k(t) \psi_{I \setminus \{k\}}(t) 
		\end{equation*}
		and
		\begin{align*}
			\frac{d}{dt} \beta_{I}(t) &=  \sum_{I =p_1 \sqcup \ldots \sqcup p_{|I|/2}} \frac{d}{dt}(\beta_{p_1}(t) \ldots \beta_{p_{|I|/2}}(t)) \\
			&= \sum_p \left(\frac{d}{dt} \beta_{p}(t)\right) \sum_{I \setminus p =p_1 \sqcup \ldots \sqcup p_{|I|/2-1}} \beta_{p_1}(t) \ldots \beta_{p_{|I|/2-1}}(t) = \sum_{p } \tilde{\Xi}_p(t) \beta_{I \setminus p}(t)
		\end{align*}
		for even $ |I| $ and the sum is taken over all possible pairings (matchings). Thus, we have
		\begin{align*}
			\tilde{\mathfrak{a}}_{I} (t)  =& \tilde{\mathfrak{a}}_{I} (0) + \sum_{I = I_1 \sqcup I_2  \sqcup I_3 } \int_0^t d \tau \;  \left(\frac{d}{d \tau}\psi_{I_1}(\tau)\right) \beta_{I_2}(\tau) \tilde{\mathfrak{a}}_{I_3} (0) \\
			&+  \int_0^t d \tau \;  \sum_{I = I_1 \sqcup I_2  \sqcup I_3} \psi_{I_1} (\tau) \frac{d}{d \tau} \beta_{I_2} (\tau) \tilde{\mathfrak{a}}_{I_3} (0) = \sum_{I = I_1 \sqcup I_2  \sqcup I_3 } \psi_{I_1} (t) \beta_{I_2} (t) \tilde{\mathfrak{a}}_{I_3} (0) .
		\end{align*}
		Thus, we have proved \eqref{eq:tildeSolutions}. Taking into account \eqref{eq:intPictureTransform} we have \eqref{eq:HeisenbergSolution}.
	\end{proof}
	
	Let us consider several special cases to illustrate formula \eqref{eq:HeisenbergSolution}. For $ I = \{1\} $ we have
	\begin{equation*}
		\mathfrak{a}_{1} (t) = G_{1}(t) ( \mathfrak{a}_{1} (0) + \psi_1(t))
	\end{equation*}
	If one defines the vector of first moments $ \mu(t) \equiv \langle \mathfrak{a}_{1} (t) \rangle $, where $ \langle \; \cdot \; \rangle \equiv \operatorname{tr}(\; \cdot \; \rho_0) $ is the average taken with respect to the initial density matrix, then one has
	\begin{equation}\label{eq:meanEvol}
		\mu_1(t) = G_{1}(t) ( \mu_1(0) + \psi_1(t)).
	\end{equation}
	For $ I = \{1,2\} $ we have
	\begin{equation*}
		\mathfrak{a}_{12} (t) =  G_{12}(t) (\mathfrak{a}_{12} (0) + \psi_1(t) \mathfrak{a}_{2} (0) + \psi_2(t)  \mathfrak{a}_{1} (0) + \psi_1(t) \psi_2(t)   +  \beta_{12}(t)).
	\end{equation*}
	After averaging  with respect to the initial density matrix we have
	\begin{equation}\label{eq:secondMomEvol}
		\langle 	\mathfrak{a}_{12} (t) \rangle =  G_{12}(t) (\langle \mathfrak{a}_{12} (0) \rangle + \psi_1(t) \langle \mathfrak{a}_{2} (0 \rangle) + \psi_2(t)  \langle \mathfrak{a}_{1} (0) \rangle + \psi_1(t) \psi_2(t)   +  \beta_{12}(t)).
	\end{equation}
	
	If one defines the matrix of second central moments similarly to \cite[Def. 5]{Ter19R}  as
	\begin{equation*}
		D_{12}(t) \equiv  \langle 	\mathfrak{a}_{12} (t) \rangle - \langle \mathfrak{a}_{1} (t) \rangle \langle \mathfrak{a}_{2} (t) \rangle,
	\end{equation*}
	then we have
	\begin{equation}\label{eq:centMomEvol}
		D_{12}(t) =  G_{12}(t) ( D_{12}(0) +  \beta_{12}(t)).
	\end{equation}
	Eqs. \eqref{eq:meanEvol} and \eqref{eq:centMomEvol} coincide with \cite[Prop. 7]{Ter19R}. 
	
	For simplicity let us now assume $ \varphi = 0 $, then we have
	\begin{align*}
		\mathfrak{a}_{1234} (t)  = G_{1234}(t)( \mathfrak{a}_{1234} (0) &+  \mathfrak{a}_{12} (0) \beta_{34}(t)  + \mathfrak{a}_{13} (0) \beta_{24}(t) + \mathfrak{a}_{14} (0) \beta_{23}(t)\\
		&+\beta_{12}(t) \beta_{34}(t) +   \beta_{13} (0) \beta_{24}(t)  + \beta_{13} (0) \beta_{24}(t))
	\end{align*}
	Averaging  with respect to the initial density matrix and taking into account Eq.~\eqref{eq:secondMomEvol} we have
	\begin{align*}
		\langle \mathfrak{a}_{1234} (t) \rangle  - \langle \mathfrak{a}_{12} (t) \rangle   \langle \mathfrak{a}_{34} (t) \rangle = G_{1234}(t) (	\langle \mathfrak{a}_{1234} (0) \rangle  - \langle \mathfrak{a}_{12} (0) \rangle   \langle \mathfrak{a}_{34} (0) \rangle  \\
		+ \langle \mathfrak{a}_{13} (0) \rangle  \beta_{24}(t) + \langle \mathfrak{a}_{14}  (0)\rangle \beta_{23}(t) +   \beta_{13} (0) \beta_{24}(t)  + \beta_{13} (0) \beta_{24}(t) ).
	\end{align*}
	In particular, such a tensor contains the terms of the form $ \langle \hat{a}_{i}^{\dagger} \hat{a}_{i} \hat{a}_{j}^{\dagger} \hat{a}_{j}  \rangle -  \langle \hat{a}_{i}^{\dagger} \hat{a}_{i}  \rangle \langle\hat{a}_{j}^{\dagger} \hat{a}_{j} \rangle $, which describe the correlations of intensities of electromagnetic field \cite[Subsec.~12.12.2]{Mandel1995}.
	
	Let us consider the case of constant coefficients. Then the solution of Eq.~\eqref{eq:propGtEq} has the form 
	\begin{equation}\label{eq:GtConst}
		G(t) = e^{B t}.
	\end{equation}
	Then Eqs.~\eqref{eq:psiGtDef} take the form
	\begin{align}
		\psi(t) &= \int_0^t d \tau \;    e^{ -B \tau}  \varphi = \frac{1 - e^{-B t}}{B} \varphi, \label{eq:psitConst}  \\
		\beta_{12}(t) &= \int_0^t d \tau \;   e^{ -(B_1 + B_2) \tau}   \Xi_{12} = \frac{1 - e^{-(B_1 + B_2)  t}}{B_1 + B_2}  \Xi_{12}. \label{eq:XitConst}
	\end{align}
	Here and below we understand function of $ B $ as a Taylor series
	\begin{equation*}
		\frac{1 - e^{-B t}}{B} = \sum_{j=1}^{\infty} (-1)^{j+1} \frac{1}{j!} B^{j-1} t^j.
	\end{equation*}
	In particular, it is well-defined even if $ B $ is degenerate. Then Eqs. \eqref{eq:meanEvol} take the form
	\begin{equation*}
		\mu_{1} (t) = e^{B_1 t} \mu_{1} (0) + \frac{e^{B_1 t} - 1}{B_1} \varphi, \qquad D_{12} (t) = e^{(B_1 + B_2) t} D_{12} (0) + \frac{e^{(B_1 + B_2) t} - 1}{B_1 + B_2} \beta_{12},
	\end{equation*}
	which coincides with \cite[Prop. 9]{Ter19R}.

	\section{Consistency with Isserlis - Wick theorem}
	\label{sec:Isserlis}
	
	Several different (but deeply related) statements are usually called the Wick theorem \cite[Parargraph 17]{Bogolubov2005}. Namely, some of them are about normal order of operators and valid independently of state. The other ones are about higher moments of creation and annihilation operators for Gaussian states (often about only some special cases of them like vaccuum or thermal states for quadratic Hamiltonians). So the latter ones are quantum versions of the Isserlis theorem for higher moments of classical Gaussian states. To highlight that we speak about the statement of the second kind we refer to it as the Isserlis - Wick theorem.
	
	\begin{lemma}
		Let $ g \in \mathbb{C}^{2n} $ and $ \mathbf{z}  \in \mathbb{C}^{2n}  $ have the form $ \mathbf{z} = (z_1, \ldots, z_n, \overline{z}_1, \ldots, \overline{z}_n)^T $, then
		\begin{equation}\label{eq:productOfLienarAndExponential}
			g^T\mathfrak{a} e^{i \mathbf{z}^T \mathfrak{a}}  = g^T\left(	\frac{\partial}{\partial(i \mathbf{z})}  -  \frac12 J (i \mathbf{z})  \right)  e^{i \mathbf{z}^T \mathfrak{a}}.
		\end{equation}
	\end{lemma}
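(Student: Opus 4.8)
The plan is to obtain \eqref{eq:productOfLienarAndExponential} by differentiating the exponential $e^{i\mathbf{z}^T\mathfrak{a}}$ with respect to the components of the $2n$-vector $i\mathbf{z}$, treated as independent variables, and comparing with $\mathfrak{a}_k e^{i\mathbf{z}^T\mathfrak{a}}$. The only algebraic ingredient is the commutator of a single $\mathfrak{a}_k$ with the generator $\mathbf{z}^T\mathfrak{a}$, which turns out to be a c-number.

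First I would compute, using the canonical commutation relations \eqref{eq:commRel} with $f$ the $k$-th coordinate vector and $g=\mathbf{z}$, that $[\mathfrak{a}_k,\mathbf{z}^T\mathfrak{a}]=[\mathfrak{a}_k,\mathfrak{a}^T\mathbf{z}]=-(J\mathbf{z})_k$, a scalar. Since this commutator is central, the adjoint action of $s\,i\mathbf{z}^T\mathfrak{a}$ on $\mathfrak{a}_k$ truncates after the first term, giving $e^{s\,i\mathbf{z}^T\mathfrak{a}}\,\mathfrak{a}_k\,e^{-s\,i\mathbf{z}^T\mathfrak{a}}=\mathfrak{a}_k+s\,i\,(J\mathbf{z})_k$, i.e.
\begin{equation*}
  e^{s\,i\mathbf{z}^T\mathfrak{a}}\,\mathfrak{a}_k=\bigl(\mathfrak{a}_k+s\,i\,(J\mathbf{z})_k\bigr)\,e^{s\,i\mathbf{z}^T\mathfrak{a}}.
\end{equation*}

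Next I would apply the Duhamel formula for the derivative of an operator exponential, writing $i\mathbf{z}^T\mathfrak{a}=\sum_j(i\mathbf{z})_j\mathfrak{a}_j$ so that
\begin{equation*}
  \frac{\partial}{\partial(i\mathbf{z})_k}\,e^{i\mathbf{z}^T\mathfrak{a}}
  =\int_0^1 ds\; e^{s\,i\mathbf{z}^T\mathfrak{a}}\,\mathfrak{a}_k\,e^{(1-s)\,i\mathbf{z}^T\mathfrak{a}}.
\end{equation*}
Substituting the commutation identity above, recombining the two exponentials, and performing the elementary $s$-integral yields $\frac{\partial}{\partial(i\mathbf{z})_k}e^{i\mathbf{z}^T\mathfrak{a}}=\bigl(\mathfrak{a}_k+\tfrac{i}{2}(J\mathbf{z})_k\bigr)e^{i\mathbf{z}^T\mathfrak{a}}$. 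Rearranging and using $\tfrac{i}{2}(J\mathbf{z})_k=\tfrac12(J(i\mathbf{z}))_k$ gives $\mathfrak{a}_k e^{i\mathbf{z}^T\mathfrak{a}}=\bigl(\frac{\partial}{\partial(i\mathbf{z})_k}-\tfrac12(J(i\mathbf{z}))_k\bigr)e^{i\mathbf{z}^T\mathfrak{a}}$; contracting with $g^T$ produces \eqref{eq:productOfLienarAndExponential}.

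The $s$-integration and index bookkeeping are routine; the only points requiring care are that all these manipulations with unbounded operators are carried out purely formally, as stated in Section \ref{sec:Leibniz}, and that $\partial/\partial(i\mathbf{z})$ is to be read as the componentwise gradient in the $2n$ entries of $i\mathbf{z}$ treated as independent, so that the special form $\mathbf{z}=(z_1,\dots,z_n,\overline z_1,\dots,\overline z_n)^T$ plays no role in the computation itself. I do not anticipate any genuine obstacle beyond these interpretive remarks.
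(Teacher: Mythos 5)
Your proposal is correct and follows essentially the same route as the paper: the Duhamel formula you invoke is exactly the Feynman--Wilcox formula used there, and the conjugation identity $e^{s\,i\mathbf{z}^T\mathfrak{a}}\,\mathfrak{a}\,e^{-s\,i\mathbf{z}^T\mathfrak{a}}=\mathfrak{a}+s\,J(i\mathbf{z})$ that you derive from the canonical commutation relations is the same ingredient the paper cites from an earlier reference. The only (harmless) difference is that you prove this conjugation identity in place rather than quoting it.
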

	
	\begin{proof}
		By Feynman-Wilcox formula \cite{Chebotarev12}
		\begin{align*}
			\frac{\partial}{\partial(i \mathbf{z})} e^{i \mathbf{z}^T \mathfrak{a}} = \int_0^1 ds e^{i \mathbf{z}^T \mathfrak{a} s}	\frac{\partial}{\partial(i \mathbf{z})} (i \mathbf{z}^T \mathfrak{a}) e^{-i \mathbf{z}^T \mathfrak{a} s} e^{i \mathbf{z}^T \mathfrak{a}} = \int_0^1 ds e^{i \mathbf{z}^T \mathfrak{a} s}	 \mathfrak{a} e^{-i \mathbf{z}^T \mathfrak{a} s} e^{i \mathbf{z}^T \mathfrak{a}} \\
			=\int_0^1 ds \left(\mathfrak{a} + s J (i \mathbf{z}) \right)  e^{i \mathbf{z}^T \mathfrak{a}} = \left(\mathfrak{a} + \frac12 J (i \mathbf{z}) \right)  e^{i \mathbf{z}^T \mathfrak{a}},
		\end{align*}
		where $ e^{i \mathbf{z}^T \mathfrak{a} s}	 \mathfrak{a} e^{-i \mathbf{z}^T \mathfrak{a} s} = \mathfrak{a} + s J (i \mathbf{z})  $ due to \cite[Lemma 4]{Ter19R}. Multiplying by $ g $ we obtain \eqref{eq:productOfLienarAndExponential}.
	\end{proof}
	
	Applying Eq. \eqref{eq:productOfLienarAndExponential} iteratively we obtain the following lemma.
	\begin{lemma}
		\label{lem:prodMultExp}
		Let $ g_1, \ldots, g_m \in \mathbb{C}^{2n} $ and $ \mathbf{z}  \in \mathbb{C}^{2n}  $ have the form $ \mathbf{z} = (z_1, \ldots, z_n, \overline{z}_1, \ldots, \overline{z}_n)^T $, then
		\begin{equation}
			\label{eq:prodMultExp}
			g_1^T\mathfrak{a} \ldots g_m^T\mathfrak{a} e^{i \mathbf{z}^T \mathfrak{a}}  = g_m^T\left(	\frac{\partial}{\partial(i \mathbf{z})}  -  \frac12 J (i \mathbf{z})  \right) \ldots  g_1^T\left(	\frac{\partial}{\partial(i \mathbf{z})}  -  \frac12 J (i \mathbf{z})  \right)  e^{i \mathbf{z}^T \mathfrak{a}}.
		\end{equation}
	\end{lemma}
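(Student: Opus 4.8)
The plan is to prove \eqref{eq:prodMultExp} by induction on $m$. The base case $m=1$ is precisely the preceding lemma, i.e.\ Eq.~\eqref{eq:productOfLienarAndExponential}, so there is nothing to check there. For the inductive step I would peel off one linear factor at a time.

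Concretely, assume \eqref{eq:prodMultExp} holds for products of $m-1$ linear forms and write
\[
  g_1^T\mathfrak{a}\,\ldots\,g_m^T\mathfrak{a}\,e^{i\mathbf{z}^T\mathfrak{a}} = g_1^T\mathfrak{a}\,\bigl(g_2^T\mathfrak{a}\,\ldots\,g_m^T\mathfrak{a}\,e^{i\mathbf{z}^T\mathfrak{a}}\bigr).
\]
Applying the induction hypothesis to the bracket rewrites it as $g_m^T\bigl(\tfrac{\partial}{\partial(i\mathbf{z})}-\tfrac12 J(i\mathbf{z})\bigr)\ldots g_2^T\bigl(\tfrac{\partial}{\partial(i\mathbf{z})}-\tfrac12 J(i\mathbf{z})\bigr)e^{i\mathbf{z}^T\mathfrak{a}}$. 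Next I would move the remaining factor $g_1^T\mathfrak{a}$ past all of the $\mathbf{z}$-differential operators $g_k^T\bigl(\tfrac{\partial}{\partial(i\mathbf{z})}-\tfrac12 J(i\mathbf{z})\bigr)$, $k = 2, \ldots, m$, so that it acts directly on $e^{i\mathbf{z}^T\mathfrak{a}}$, and then invoke the base lemma \eqref{eq:productOfLienarAndExponential} once more to replace $g_1^T\mathfrak{a}\,e^{i\mathbf{z}^T\mathfrak{a}}$ by $g_1^T\bigl(\tfrac{\partial}{\partial(i\mathbf{z})}-\tfrac12 J(i\mathbf{z})\bigr)e^{i\mathbf{z}^T\mathfrak{a}}$. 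This produces exactly the right-hand side of \eqref{eq:prodMultExp} with $m$ factors, completing the induction.

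The one step that deserves justification is the commutation of $g_1^T\mathfrak{a}$ through the composition of first-order differential operators in $\mathbf{z}$; this is the natural candidate for the main obstacle, though it turns out to be routine. It works because $g_1^T\mathfrak{a}$ is independent of $\mathbf{z}$: the multiplication part $-\tfrac12 g_k^T J(i\mathbf{z})$ of each operator is a scalar (in the operator sense) function of $\mathbf{z}$ and hence commutes with the $\mathbf{z}$-constant operator $g_1^T\mathfrak{a}$, while the $g_k^T\tfrac{\partial}{\partial(i\mathbf{z})}$ part, by the Leibniz rule, does not see the $\mathbf{z}$-constant factor $g_1^T\mathfrak{a}$ and so produces no extra term. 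Note that the non-commutativity of $g_1^T\mathfrak{a}$ with the $\mathfrak{a}$ sitting in the exponent never enters the argument, because we only move a $\mathbf{z}$-constant through $\mathbf{z}$-differential operators and never reorder operators on the Hilbert space. One could equally run the induction peeling off the rightmost factor $g_m^T\mathfrak{a}$ instead, using that $g_1^T\mathfrak{a}\,\ldots\,g_{m-1}^T\mathfrak{a}$ is $\mathbf{z}$-independent; the two arguments are equivalent.
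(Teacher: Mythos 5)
Your proof is correct and matches the paper's intent exactly: the paper gives no argument beyond the remark that the lemma follows by applying Eq.~\eqref{eq:productOfLienarAndExponential} iteratively, and your induction is precisely that iteration, with the one step the paper leaves implicit --- commuting the $\mathbf{z}$-independent operator $g_1^T\mathfrak{a}$ through the first-order differential operators in $\mathbf{z}$, which also accounts for the reversal of the ordering of the factors --- correctly identified and justified.
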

	
	The definition \cite[Subsec.~12.3.1]{Holevo2012} of the characteristic function of the state $ \rho $ in our notation takes the form 
	\begin{equation}\label{eq:charFun}
		h(\mathbf{z}) \equiv \operatorname{tr} (e^{i \mathbf{z}^T \mathfrak{a}} \rho), \qquad \mathbf{z} = (z_1, \ldots, z_n, \overline{z}_1, \ldots, \overline{z}_n)^T.
	\end{equation}
	
	Lemma \ref{lem:prodMultExp} allows one to calculate moments of creation and annihilation operators in terms of $ h(\mathbf{z}) $, namely,  averaging both sides of Eq.~\eqref{eq:prodMultExp} we obtain the following corollary.
	\begin{corollary}
		\label{cor:meanMomets}
		Let $ g_1, \ldots, g_m \in \mathbb{C}^{2n} $ and $ h(\mathbf{z}) $ be defined by Eq.~\eqref{eq:charFun}
		\begin{equation*}
			\langle g_1^T\mathfrak{a} \ldots g_m^T\mathfrak{a}  \rangle  = \left. g_m^T\left(	\frac{\partial}{\partial(i \mathbf{z})}  -  \frac12 J (i \mathbf{z})  \right) \ldots  g_1^T\left(	\frac{\partial}{\partial(i \mathbf{z})}  -  \frac12 J (i \mathbf{z})  \right)  h(\mathbf{z}) \right|_{\mathbf{z} = 0},
		\end{equation*}
		where $ \langle \; \cdot \; \rangle \equiv \operatorname{tr}(\; \cdot \; \rho_0) $.
	\end{corollary}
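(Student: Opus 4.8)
The plan is to derive the corollary directly from Lemma~\ref{lem:prodMultExp} by averaging both sides of Eq.~\eqref{eq:prodMultExp}. Concretely, I would multiply Eq.~\eqref{eq:prodMultExp} on the right by the initial density matrix $\rho_0$ and apply the trace. Using $\langle\,\cdot\,\rangle \equiv \operatorname{tr}(\,\cdot\,\rho_0)$, the left-hand side becomes $\langle g_1^T\mathfrak{a}\ldots g_m^T\mathfrak{a}\, e^{i\mathbf{z}^T\mathfrak{a}}\rangle$.

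On the right-hand side, observe that each operator $g_k^T\bigl(\tfrac{\partial}{\partial(i\mathbf{z})} - \tfrac12 J(i\mathbf{z})\bigr)$ acts only on the scalar variable $\mathbf{z}$ and not on the Hilbert-space operators: $\tfrac{\partial}{\partial(i\mathbf{z})}$ differentiates with respect to the $c$-number $\mathbf{z}$, and $-\tfrac12 J(i\mathbf{z})$ is multiplication by a $c$-number vector. Since $\operatorname{tr}(\,\cdot\,\rho_0)$ is linear and commutes with $\partial/\partial(i\mathbf{z})$ and with multiplication by $c$-numbers, all $m$ of these operators pull outside the trace, so the right-hand side becomes $g_m^T\bigl(\tfrac{\partial}{\partial(i\mathbf{z})} - \tfrac12 J(i\mathbf{z})\bigr)\cdots g_1^T\bigl(\tfrac{\partial}{\partial(i\mathbf{z})} - \tfrac12 J(i\mathbf{z})\bigr)\,\operatorname{tr}(e^{i\mathbf{z}^T\mathfrak{a}}\rho_0)$, which is exactly $g_m^T(\cdots)\cdots g_1^T(\cdots)\,h(\mathbf{z})$ with $h$ the characteristic function from Eq.~\eqref{eq:charFun}.

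Finally I would set $\mathbf{z}=0$: on the left $e^{i\mathbf{z}^T\mathfrak{a}}\big|_{\mathbf{z}=0}$ is the identity, so the left side reduces to $\langle g_1^T\mathfrak{a}\ldots g_m^T\mathfrak{a}\rangle$, while the right side is precisely the expression in the statement. The only delicate point — which I would flag explicitly — is the interchange of the trace with the $\mathbf{z}$-derivatives; with the unbounded operators treated formally, as elsewhere in the paper, this is immediate, and it can be made rigorous by assuming enough regularity on $\rho_0$ (so that the relevant moments exist and $h$ is smooth near the origin, as holds for Gaussian $\rho_0$).
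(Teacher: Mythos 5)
Your proposal is correct and takes exactly the paper's route: the paper obtains this corollary simply by ``averaging both sides of Eq.~\eqref{eq:prodMultExp}'' with no further detail. Your explicit justification --- that the operators $g_k^T\bigl(\frac{\partial}{\partial(i\mathbf{z})}-\frac12 J(i\mathbf{z})\bigr)$ act only on the $c$-number variable $\mathbf{z}$ and hence commute with $\operatorname{tr}(\;\cdot\;\rho_0)$, followed by setting $\mathbf{z}=0$ --- merely fills in what the paper leaves implicit, and your caveat about the formal treatment of unbounded operators matches the paper's own stance.
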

	
	Now let us consider a Gaussian state with zero mean and the covariance matrix $ C $, i.e. let us assume \cite[Subsec.~12.3.2]{Holevo2012} $ h(\mathbf{z}) = e^{\frac12(i \mathbf{z})^T C (i \mathbf{z})}  $. The covariance matrix $ C $ is a symmetric part of the matrix of second central moments \cite[Sec.~3.3]{Ter19R}
	\begin{equation*}
		C = \frac{1}{2}(D + D^T),
	\end{equation*}
	Due to the fact that the skew-symmetric part of the matrix $ D $ is defined by canonical commutation relations \eqref{eq:commRel}, we have
	\begin{equation*}
		D = C- \frac12 J.
	\end{equation*}

	\begin{lemma}
		For a Gaussian state with zero mean and the matrix of second central moments $ D $, we have
		\begin{equation}\label{eq:recRel}
			\langle \mathfrak{a}_{\{1,2\} \sqcup I} \rangle = \sum_{j,k \in I, j< k}(D_{1j}  D_{2k} + D_{1k} D_{2j} )	\langle \mathfrak{a}_{I \setminus \{1,2, j, k\} } \rangle  + D_{12} \langle \mathfrak{a}_{I} \rangle.
		\end{equation}
	\end{lemma}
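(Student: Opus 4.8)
The plan is to use the differential-operator representation of moments from Corollary~\ref{cor:meanMomets} together with the explicit Gaussian characteristic function $h(\mathbf{z}) = e^{\frac12(i\mathbf{z})^T C (i\mathbf{z})}$, and to reorganize the resulting expression so as to isolate the action of the two distinguished indices $1$ and $2$. Concretely, write $\langle \mathfrak{a}_{\{1,2\}\sqcup I}\rangle$ as the value at $\mathbf{z}=0$ of the iterated operator
\[
 g_m^T\Bigl(\tfrac{\partial}{\partial(i\mathbf{z})} - \tfrac12 J(i\mathbf{z})\Bigr)\cdots g_1^T\Bigl(\tfrac{\partial}{\partial(i\mathbf{z})} - \tfrac12 J(i\mathbf{z})\Bigr) h(\mathbf{z}),
\]
where the $g$'s run over the standard basis vectors labelled by $\{1,2\}\sqcup I$. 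The key observation is that the two operators attached to indices $1$ and $2$ can be commuted into any position among the others (the operators $g_k^T(\partial_{i\mathbf{z}} - \tfrac12 J(i\mathbf{z}))$ for different $k$ commute, since $\partial_{i z_a}(i z_b)$ is symmetric in $a,b$ up to the antisymmetric $J$-term, and one checks the commutator vanishes — this is exactly the consistency already implicit in Lemma~\ref{lem:prodMultExp}). Hence one may apply the two operators for $1$ and $2$ last, i.e. act with $e_2^T(\partial_{i\mathbf{z}} - \tfrac12 J(i\mathbf{z}))$ and then $e_1^T(\partial_{i\mathbf{z}} - \tfrac12 J(i\mathbf{z}))$ on the function $F_I(\mathbf{z}) := \prod_{k\in I}e_k^T(\partial_{i\mathbf{z}} - \tfrac12 J(i\mathbf{z}))\,h(\mathbf{z})$ and set $\mathbf{z}=0$ at the end.

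Next I would use the Gaussian structure: $\partial_{i z_a} h(\mathbf{z}) = (C\,i\mathbf{z})_a\, h(\mathbf{z})$, so that $e_a^T(\partial_{i\mathbf{z}} - \tfrac12 J(i\mathbf{z})) h = e_a^T(C - \tfrac12 J)(i\mathbf{z})\,h = (D\,i\mathbf{z})_a\,h$, using $D = C - \tfrac12 J$. Therefore acting with the operator for index $1$ on a product of the form $(\text{polynomial in }i\mathbf{z})\cdot h(\mathbf{z})$ produces two contributions by the Leibniz rule: either $\partial_{i z_1}$ hits the polynomial prefactor (contracting index $1$ with one of the other free indices, and contributing the corresponding entry of $D$), or it hits $h$, producing a factor $(D\, i\mathbf{z})_1$ that then gets killed unless it is in turn differentiated by the operator for index $2$. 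Carrying this out for indices $1$ and $2$ acting last: the term where $\partial_{i z_1}$ and $\partial_{i z_2}$ both hit $h$ and then contract with each other yields $D_{12}\langle\mathfrak{a}_I\rangle$; the term where $\partial_{i z_1}$ contracts with free index $j\in I$ and $\partial_{i z_2}$ contracts with free index $k\in I$ (in both orders of which pairs with which) yields $(D_{1j}D_{2k} + D_{1k}D_{2j})\langle \mathfrak{a}_{I\setminus\{j,k\}}\rangle$; the "mixed" terms where one of $\partial_{i z_1},\partial_{i z_2}$ hits $h$ while the other hits the prefactor give a leftover factor $(D\,i\mathbf{z})$ evaluated at $\mathbf{z}=0$, hence vanish. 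Summing over $j<k$ in $I$ and recognizing $\langle\mathfrak{a}_{I\setminus\{j,k\}}\rangle$ gives exactly \eqref{eq:recRel}. (A small notational point: in \eqref{eq:recRel} the set $I$ on the left is the index set $\{1,2\}\sqcup I$, so "$I\setminus\{1,2,j,k\}$" should read as removing $j,k$ from the $I$ on the right; I would keep the paper's indexing but double-check the bookkeeping.)

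Alternatively — and this may be cleaner to write — I would prove \eqref{eq:recRel} purely combinatorially from the Isserlis--Wick pairing formula $\langle \mathfrak{a}_S\rangle = \sum_{\text{pairings }\pi\text{ of }S}\prod_{\{a,b\}\in\pi}D_{ab}$, which itself follows from Corollary~\ref{cor:meanMomets} applied to the Gaussian $h$. Then the right-hand side is just the split of the pairing sum on $\{1,2\}\sqcup I$ according to the partner of $1$: either $1$ is paired with $2$ (giving $D_{12}\langle\mathfrak{a}_I\rangle$), or $1$ is paired with some $j\in I$ and $2$ with some $k\in I$, $k\neq j$ (giving, after relabelling the ordered pair $(j,k)$ to the unordered one with $j<k$, the factor $D_{1j}D_{2k}+D_{1k}D_{2j}$ times the pairing sum on the remaining indices, which is $\langle\mathfrak{a}_{I\setminus\{j,k\}}\rangle$). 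The main obstacle in either route is purely notational: keeping the index sets and the "last-acting" reordering straight, and making sure the vanishing of the mixed terms (equivalently, that a Gaussian with zero mean has no odd contribution) is stated correctly. I expect no analytic difficulty — everything reduces to the product rule and the identity $D = C - \tfrac12 J$.
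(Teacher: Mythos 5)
There is a genuine gap in your primary route: the operators $A_a \equiv g_a^T\bigl(\frac{\partial}{\partial(i\mathbf{z})}-\frac12 J(i\mathbf{z})\bigr)$ do \emph{not} commute. A direct computation gives $[A_a,A_b]=g_a^T J g_b$, which is generically nonzero; this is in fact forced by consistency with Lemma \ref{lem:prodMultExp}, since $A_a$ implements left multiplication by $g_a^T\mathfrak{a}$ and $[g_a^T\mathfrak{a},g_b^T\mathfrak{a}]=-g_a^TJg_b\neq 0$. Because $D=C-\frac12 J$ is not symmetric ($D-D^T=-J$), the moments are genuinely order-sensitive --- already $\langle\mathfrak{a}_1\mathfrak{a}_2\rangle=D_{12}\neq D_{21}=\langle\mathfrak{a}_2\mathfrak{a}_1\rangle$ --- so moving the operators attached to the indices $1,2$ from the innermost position (where Lemma \ref{lem:prodMultExp} puts them, because $\mathfrak{a}_1\mathfrak{a}_2$ stand leftmost in the operator product) to the outermost position changes the quantity being computed by nonvanishing $J$-corrections. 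Your Leibniz-rule bookkeeping of which derivative hits which factor is the right idea, but it must be carried out without the reordering. That is what the paper does: it applies $A_1$ and $A_2$ first, obtaining the prefactor $g_1^TDg_2+g_1^TD(i\mathbf{z})\,g_2^TD(i\mathbf{z})$, and then commutes that \emph{scalar-valued prefactor} (rather than the operators themselves) past the remaining $A_l$; the resulting double commutators are exactly the cross terms $D_{1j}D_{2k}+D_{1k}D_{2j}$, while the surviving prefactor at $\mathbf{z}=0$ gives $D_{12}\langle\mathfrak{a}_I\rangle$.

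Your alternative combinatorial route is circular in the context of the paper: the full pairing formula $\langle\mathfrak{a}_S\rangle=\sum_\pi\prod_{\{a,b\}\in\pi}D_{ab}$ is Theorem \ref{th:IsserlisWick}, which the paper obtains precisely by iterating the recurrence \eqref{eq:recRel} you are asked to prove. Asserting that the pairing formula ``follows from Corollary \ref{cor:meanMomets} applied to the Gaussian $h$'' is not a shortcut, because the whole difficulty --- obtaining the non-symmetric $D$ in the pairings rather than the symmetric $C$, i.e.\ respecting operator ordering --- is exactly the non-commutativity that your first route assumed away. If the Isserlis--Wick theorem is taken as known from elsewhere, your splitting of the pairings of $\{1,2\}\sqcup I$ by the partners of $1$ and $2$ is correct and does yield \eqref{eq:recRel}; but within this paper it cannot serve as the proof.
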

	
	\begin{proof}
		Let us calculate
		\begin{equation*}
			g_1^T\left(\frac{\partial}{\partial(i \mathbf{z})}- \frac12 J (i \mathbf{z})\right) e^{\frac12(i \mathbf{z})^T C (i \mathbf{z})} = g_1^T\left(C- \frac12 J \right) (i \mathbf{z})e^{\frac12(i \mathbf{z})^T C (i \mathbf{z})}  = g_1^T D (i \mathbf{z}) e^{\frac12(i \mathbf{z})^T C (i \mathbf{z})} ,
		\end{equation*}
		then
		\begin{align}
			g_2^T\left(\frac{\partial}{\partial(i \mathbf{z})}- \frac12 J (i \mathbf{z})\right)  g_1^T\left(\frac{\partial}{\partial(i \mathbf{z})}- \frac12 J (i \mathbf{z})\right) e^{\frac12(i \mathbf{z})^T C (i \mathbf{z})} \nonumber \\
			= g_2^T\left(\frac{\partial}{\partial(i \mathbf{z})}- \frac12 J (i \mathbf{z})\right)\left( g_1^T D (i \mathbf{z}) e^{\frac12(i \mathbf{z})^T C (i \mathbf{z})} \right) \nonumber \\
			= (g_1^T D g_2+  g_1^T D (i \mathbf{z}) g_2^T D (i \mathbf{z}))e^{\frac12(i \mathbf{z})^T C (i \mathbf{z})}. \label{eq:secOrdAppl}
		\end{align}
		And let us calculate the commutation relations
		\begin{align*}
			\biggl[ g_3^T\left(\frac{\partial}{\partial(i \mathbf{z})}- \frac12 J (i \mathbf{z})\right) , g_1^T D g_2+  g_1^T D (i \mathbf{z}) g_2^T D (i \mathbf{z}) \biggr] = \biggl[ g_3^T \frac{\partial}{\partial(i \mathbf{z})} ,  g_1^T D (i \mathbf{z}) g_2^T D (i \mathbf{z}) \biggr]\\
			=   g_1^T D g_3 g_2^T D (i \mathbf{z}) + g_1^T D (i \mathbf{z}) g_2^T Dg_3
		\end{align*}
		and
		\begin{align*}
			\biggl[ g_4^T\left(\frac{\partial}{\partial(i \mathbf{z})}- \frac12 J (i \mathbf{z})\right) ,  g_1^T D g_3 g_2^T D (i \mathbf{z}) + g_1^T D (i \mathbf{z}) g_2^T D g_3 \biggr]\\
			= g_1^T D g_3 g_2^T D g_4 + g_1^T D g_4 g_2^T D g_3 .
		\end{align*}
		Then by applying Eq.~\eqref{eq:secOrdAppl} and this  commutation relations we have
		\begin{align*}
			g_m^T\left(\frac{\partial}{\partial(i \mathbf{z})}- \frac12 J (i \mathbf{z})\right) \ldots g_1^T\left(\frac{\partial}{\partial(i \mathbf{z})}- \frac12 J (i \mathbf{z})\right)  e^{\frac12(i \mathbf{z})^T C (i \mathbf{z})}\\
			= g_m^T\left(\frac{\partial}{\partial(i \mathbf{z})}- \frac12 J (i \mathbf{z})\right) \ldots g_3^T\left(\frac{\partial}{\partial(i \mathbf{z})}- \frac12 J (i \mathbf{z})\right) (g_1^T D g_2+  g_1^T D (i \mathbf{z}) g_2^T D (i \mathbf{z}))e^{\frac12(i \mathbf{z})^T C (i \mathbf{z})} \\
			= (g_1^T D g_2+  g_1^T D (i \mathbf{z}) g_2^T D (i \mathbf{z}))  g_m^T\left(\frac{\partial}{\partial(i \mathbf{z})}- \frac12 J (i \mathbf{z})\right) \ldots g_3^T\left(\frac{\partial}{\partial(i \mathbf{z})}- \frac12 J (i \mathbf{z})\right) e^{\frac12(i \mathbf{z})^T C (i \mathbf{z})}\\
			+ \sum_{j,k} (g_1^T D g_k g_2^T D g_j + g_1^T D g_j g_2^T D g_k ) \prod_{l \neq 1,2, j,k} g_l^T\left(\frac{\partial}{\partial(i \mathbf{z})}- \frac12 J (i \mathbf{z})\right) e^{\frac12(i \mathbf{z})^T C (i \mathbf{z})} 
		\end{align*}
		
		Assuming $ \mathbf{z}=0 $ at both sides of these equations and taking into account corollary \ref{cor:meanMomets} we obtain Eq.~\eqref{eq:recRel}.
	\end{proof}
	
	In particular, for $ I = \{1,2,3,4\} $ we have
	\begin{equation*}
		\langle \mathfrak{a}_{1234} \rangle = D_{13}  D_{24} + D_{14} D_{23}   + D_{12} \langle \mathfrak{a}_{34} \rangle,
	\end{equation*}
	and taking into account $ \langle \mathfrak{a}_{34} \rangle =  D_{34}  $ we obtain
	\begin{equation*}
		\langle \mathfrak{a}_{1234} \rangle = D_{12}  D_{34} +D_{13}  D_{24} + D_{14} D_{23}.
	\end{equation*}
	And for $ I = \{1,2,3\} $ we have
	\begin{equation*}
		\langle \mathfrak{a}_{123} \rangle = D_{12}   \langle \mathfrak{a}_{3} \rangle =0.
	\end{equation*}
	
	In general, applying recurrence relation \eqref{eq:recRel} iteratively we obtain Isserlis - Wick theorem.
	
	\begin{theorem}\label{th:IsserlisWick}
		(Isserlis - Wick) 
		For the Gaussian state with zero mean and the matrix of second central moments $ D $, we have
		\begin{equation*}
			\langle a_I  \rangle = D_I,
		\end{equation*}
		where $ D_I $ is defined similarly to Eq.~\eqref{eq:psiGtBetaIDef} for $ \beta_{I}(t) $
		\begin{equation*}
			D_I = \begin{cases}
				\sum_{I = p_1 \sqcup \ldots \sqcup p_{|I|/2}} D_{p_1} \ldots D_{p_{|I|/2}}, & \text{for even }|I|,\\
				0, & \text{for odd }|I|,
			\end{cases}
		\end{equation*}
		and the sum is taken over all pairings of elements of $ I $.
	\end{theorem}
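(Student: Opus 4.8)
The plan is to establish $\langle \mathfrak{a}_I \rangle = D_I$ by strong induction on the cardinality $m = |I|$, the recurrence relation \eqref{eq:recRel} being exactly the tool that expresses a moment of order $m$ through moments of orders $m-2$ and $m-4$.

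As base cases I would take $m = 0$ and $m = 1$. For $I = \emptyset$ both sides equal $1$ (the empty product of entries of $D$, and $\langle\,\cdot\,\rangle$ of the identity operator). For $|I| = 1$ both sides vanish: the left-hand side because the state has zero mean, the right-hand side because $m$ is odd. (The case $|I| = 2$, namely $\langle \mathfrak{a}_{\{a,b\}} \rangle = D_{ab}$, holds because for a zero-mean state $D$ is simultaneously the matrix of second moments; it will in any case reappear as a consequence of the inductive step.)

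For the inductive step, take $I$ with $m = |I| \geqslant 2$, let $a$ and $b$ be its two smallest elements, and set $I' = I \setminus \{a,b\}$, so $|I'| = m-2$. Applying \eqref{eq:recRel} with $a, b$ playing the roles of the indices $1, 2$ (this relabeling is legitimate because \eqref{eq:recRel} is derived from statements about arbitrary linear forms $g_1^T\mathfrak{a}, \ldots, g_m^T\mathfrak{a}$, so those indices merely mark two of the tensor slots) and with $I'$ in the role of the remaining set, one gets
\begin{equation*}
\langle \mathfrak{a}_I \rangle = \sum_{\substack{j,k \in I' \\ j < k}} (D_{aj} D_{bk} + D_{ak} D_{bj})\, \langle \mathfrak{a}_{I' \setminus \{j,k\}} \rangle + D_{ab}\, \langle \mathfrak{a}_{I'} \rangle .
\end{equation*}
Every moment on the right has cardinality strictly less than $m$, so the induction hypothesis replaces $\langle \mathfrak{a}_{I'} \rangle$ by $D_{I'}$ and $\langle \mathfrak{a}_{I' \setminus \{j,k\}} \rangle$ by $D_{I' \setminus \{j,k\}}$. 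Since $D_{I' \setminus \{j,k\}}$ is symmetric in $j$ and $k$, the double sum may be rewritten as $\sum_{j \neq k,\ j,k \in I'} D_{aj} D_{bk}\, D_{I' \setminus \{j,k\}}$.

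The remaining --- and essentially only --- step is the combinatorial check that this equals $D_I$. I would classify the pairings of $I$ by the block containing $a$: either that block is $\{a, b\}$, and the other blocks form an arbitrary pairing of $I'$, with total contribution $D_{ab} D_{I'}$; or it is $\{a, j\}$ with $j \in I'$, in which case $b$ sits in a block $\{b, k\}$ with $k \in I' \setminus \{j\}$ and the other blocks form an arbitrary pairing of $I' \setminus \{j, k\}$, with total contribution $\sum_{j \neq k} D_{aj} D_{bk}\, D_{I' \setminus \{j,k\}}$. Adding the two cases reproduces the displayed expression, giving $\langle \mathfrak{a}_I \rangle = D_I$. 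The odd case needs no separate argument: if $m$ is odd then $|I'|$ and each $|I' \setminus \{j,k\}|$ are odd, so by the induction hypothesis all the $D$'s on the right vanish and $\langle \mathfrak{a}_I \rangle = 0 = D_I$. The only thing to watch is the bookkeeping of removed indices --- using that $1, 2 \notin I$ in \eqref{eq:recRel}, so that $I \setminus \{1,2,j,k\}$ there is simply $I \setminus \{j,k\}$ --- and the ranges of the summation indices; there is no analytic difficulty, since all the real work was done in proving \eqref{eq:recRel}.
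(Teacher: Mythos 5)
Your proof is correct and follows essentially the same route as the paper: the paper, having established the recurrence \eqref{eq:recRel}, simply asserts that ``applying it iteratively'' yields the theorem, and your strong induction on $|I|$ together with the classification of pairings by the block containing the smallest element is exactly the argument needed to make that iteration rigorous. The bookkeeping points you flag (relabelling the indices $1,2$ in \eqref{eq:recRel}, and $I\setminus\{1,2,j,k\}=I\setminus\{j,k\}$ since $1,2\notin I$) are indeed the only subtleties, and you handle them correctly.
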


	\begin{corollary}
		For the Gaussian state with mean  $ \mu $ and the matrix of second central moments $ D $, we have
		\begin{equation}\label{eq:IsserlisWickNonZeroMean}
			\langle a_I  \rangle = \sum_{I = I_1 \sqcup I_2} \mu_{I_1} D_{I_2} .
		\end{equation}
	\end{corollary}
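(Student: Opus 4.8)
\emph{Proof plan.} The plan is to reduce the claim to the zero-mean Isserlis--Wick theorem (Theorem~\ref{th:IsserlisWick}) by recentering the operators. For a Gaussian state $\rho$ with mean $\mu$ set $\mathfrak{a}'_k \equiv \mathfrak{a}_k - \mu_k$, $k = 1,\ldots,2n$, where the $\mu_k$ are c-numbers, so that $\langle \mathfrak{a}'_k \rangle = 0$. Since scalars commute with every operator, the ordered product $\mathfrak{a}_I = \prod_{k \in I}(\mathfrak{a}'_k + \mu_k)$ expands unambiguously as
\[
	\mathfrak{a}_I = \sum_{I = I_1 \sqcup I_2} \mu_{I_1}\, \mathfrak{a}'_{I_2},
\]
where $\mu_{I_1} \equiv \prod_{k \in I_1}\mu_k$ and $\mathfrak{a}'_{I_2}$ is the ordered product of the centered operators over $I_2$, with $\mathfrak{a}'_{\emptyset} \equiv 1$. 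Averaging, $\langle \mathfrak{a}_I \rangle = \sum_{I = I_1 \sqcup I_2}\mu_{I_1}\,\langle \mathfrak{a}'_{I_2}\rangle$, so it suffices to show $\langle \mathfrak{a}'_{I_2}\rangle = D_{I_2}$.

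For the last identity I would observe that the characteristic function of the recentered operators, $\operatorname{tr}(e^{i \mathbf{z}^T \mathfrak{a}'}\rho) = e^{-i\mathbf{z}^T\mu}\operatorname{tr}(e^{i\mathbf{z}^T\mathfrak{a}}\rho)$, equals $e^{\frac12(i\mathbf{z})^T C (i\mathbf{z})}$ upon inserting the Gaussian characteristic function with mean $\mu$, namely $h(\mathbf{z}) = e^{i\mathbf{z}^T\mu + \frac12(i\mathbf{z})^T C (i\mathbf{z})}$ in the sense of Eq.~\eqref{eq:charFun}. Thus $\langle \mathfrak{a}'_{I_2}\rangle$ is computed by exactly the same chain as in Corollary~\ref{cor:meanMomets} and the proof of Theorem~\ref{th:IsserlisWick}, now for a zero-mean Gaussian with covariance $C$, equivalently with matrix of second central moments $D = C - \frac12 J$; hence $\langle \mathfrak{a}'_{I_2}\rangle = D_{I_2}$. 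Substituting back gives Eq.~\eqref{eq:IsserlisWickNonZeroMean}, and the conventions $D_{\emptyset} = 1$ and $D_{I_2} = 0$ for odd $|I_2|$ make the two sides match term by term.

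I do not expect a genuine obstacle here: the content is essentially bookkeeping. The only points needing a moment's care are (i) checking that shifting each factor of the ordered product by a scalar produces precisely the sum over ordered disjoint splittings $I = I_1 \sqcup I_2$, and (ii) confirming that recentering leaves the second central moments (hence $D$) unchanged, so that Theorem~\ref{th:IsserlisWick} applies verbatim to the centered moments. An alternative way to phrase (ii) that avoids characteristic functions is to write $\rho = W \rho_0 W^{\dagger}$ for a Weyl displacement $W$ with $W^{\dagger}\mathfrak{a}_k W = \mathfrak{a}_k + \mu_k$, where $\rho_0$ is the zero-mean Gaussian state with the same $D$; then $\langle \mathfrak{a}'_{I_2}\rangle_\rho = \operatorname{tr}(\mathfrak{a}_{I_2}\rho_0) = D_{I_2}$ directly by Theorem~\ref{th:IsserlisWick}.
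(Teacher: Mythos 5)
Your proposal is correct and follows essentially the same route as the paper: recenter the operators, expand the ordered product $\prod_{k\in I}(\mathfrak{a}'_k+\mu_k)$ over disjoint splittings $I=I_1\sqcup I_2$, and apply Theorem~\ref{th:IsserlisWick} to the centered factors. The only difference is that you explicitly justify (via the characteristic function, or a Weyl displacement) that the centered moments satisfy the zero-mean theorem with the same $D$, a step the paper leaves implicit.
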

	
	\begin{proof}
		For  the Gaussian state with non-zero mean $ \mu $ Theorem \ref{th:IsserlisWick} can be applied to
		\begin{equation*}
			\prod_{k \in I}(a_k - \mu_k) =D_{I},
		\end{equation*}
		then
		\begin{equation*}
			\langle a_I  \rangle = \prod_{k \in I}(a_k - \mu_k + \mu_k)  = \sum_{I = I_1 \sqcup I_2} \prod_{k \in I_1}(a_k - \mu_k)  \prod_{l \in I_2}\mu_l = \sum_{I = I_1 \sqcup I_2} \mu_{I_1} D_{I_2} .
		\end{equation*}
	\end{proof}
	
	In particular, we have
	\begin{align*}
		\langle \mathfrak{a}_1  \rangle =& \mu_1, \\
		\langle \mathfrak{a}_{12}  \rangle =&  \mu_1 \mu_2 + D_{12},  \\
		\langle \mathfrak{a}_{123} \rangle =& D_{12} \mu_3 + D_{13} \mu_2 + D_{23} \mu_1  + \mu_1 \mu_2 \mu_3,\\
		\langle \mathfrak{a}_{1234} \rangle =&  \mu_1 \mu_2   + \mu_1 \mu_2   D_{34} + \mu_1 \mu_3   D_{24} + \mu_1 \mu_4  D_{23}, \\
		&+ \mu_3 \mu_4 D_{12}  + \mu_2 \mu_4 D_{13} + \mu_2 \mu_3 D_{14}  + D_{12}  D_{34} +D_{13}  D_{24} + D_{14} D_{23} .
	\end{align*}
	
	Now let us assume that the initial state is Gaussian, i.e. the moments of creation and annihilation operators satisfy Eq.~\eqref{eq:IsserlisWickNonZeroMean} and the evolution of $ \mathfrak{a}_I $ is defined by Eq.~\eqref{eq:HeisenbergSolution}. Then
	Eq. \eqref{eq:meanEvol} leads to
	\begin{equation*}
		\mu_I(t) = \prod_{k \in I} ( G_k(t) ( \mu_k(0) + \psi_k(t))) = G_{I}(t)  \sum_{I =I_1 \sqcup I_2} \mu_{I_1}(0) \psi_{I_2}(t)
	\end{equation*}
	and Eq. \eqref{eq:centMomEvol} leads to
	\begin{align*}
		D_{I}(t) =  \sum_{I = p_1 \sqcup \ldots \sqcup p_{|I|/2}} G_{p_1}(t) ( D_{p_1}(0) +  \beta_{p_1}(t)) \ldots G_{p_{|I|/2}}(t) ( D_{p_{|I|/2}}(0) +  \beta_{p_{|I|/2}}(t)) \\
		= G_{I}(t) \sum_{I =I_1 \sqcup I_2} D_{I_1}(0) \beta_{I_2}(t).
	\end{align*}
	Then by Eq.~\eqref{eq:HeisenbergSolution} we have
	\begin{align*}
		\langle \mathfrak{a}_{I} (t) \rangle &= G_{I}(t) \sum_{I = I_1 \sqcup I_2  \sqcup I_3 } \psi_{I_1} (t) \beta_{I_2} (t) \langle \mathfrak{a}_{I_3} (0) \rangle = G_{I}(t) \sum_{I = I_1 \sqcup I_2  \sqcup I_3 \sqcup I_4 } \psi_{I_1} (t) \beta_{I_2} (t)  \mu_{I_3}(0) D_{I_4}(0) \\
		&=  \sum_{I = I_1' \sqcup I_2' }  \left(G_{I_1'}(t) \sum_{ I_1' = I_1 \sqcup I_3 }\psi_{I_1} (t) \mu_{I_3}(0) \right)  \left(G_{I_2'}(t)\sum_{ I_2' = I_2 \sqcup I_4 } \beta_{I_2} (t)  D_{I_4}(0)\right) \\
		&= \sum_{I = I_1' \sqcup I_2'} \mu_{I_1'}(t) D_{I_2'}(t) = \sum_{I = I_1 \sqcup I_2} \mu_{I_1}(t) D_{I_2}(t)
	\end{align*}
	Hence, the dynamics preserves the Isserlis - Wick theorem. This is not surprising, since the dynamics with a quadratic generator preserves the Gaussian states \cite{Vanheuverzwijn1978, Heinosaari2010}.

	\section{Averaging with respect to Poisson process}
	\label{sec:Poisson}
	
	If one now averages time-evolution with respect to the Poisson process with the parameter $ \lambda > 0 $, then we obtain a new semigroup 
	\begin{equation*}
		\sum_{n=0}^{\infty} e^{ n\mathcal{L}_{H, \Gamma, f}^* } \frac{(\lambda t)^n}{n!} e^{- \lambda t}= e^{\lambda t (  e^{\mathcal{L}_{H, \Gamma, f}^* } - 1)}
	\end{equation*}
	with the generator $  e^{\mathcal{L}_{H, \Gamma, f}^* } - 1 $. Thus, we have the following Heisenberg equation
	\begin{equation}\label{eq:HeisPoisson}
		\frac{d}{dt} X(t) = \lambda (e^{\mathcal{L}_{H, \Gamma, f}^*} X(t)  - X(t) ).
	\end{equation}
	
	For  $ e^{\mathcal{L}_{H, \Gamma, f}^* }  \mathfrak{a}_{I} (t)  $ can be calculated by Theorem \ref{th:HeisSol}, which leads to the following theorem.
	
	\begin{theorem}
		For $ X(t) = \mathfrak{a}_{I} (t) $ Eq. \eqref{eq:HeisPoisson} takes the form
		\begin{equation}\label{eq:PoissonHeisenberg}
			\frac{d}{dt}\mathfrak{a}_{I} (t) = \lambda (( G_{I}(1) -1) \mathfrak{a}_{I} (t)+ G_{I}(1) \sum_{I = I_1 \sqcup I_2  \sqcup I_3, I_3 \neq I} \psi_{I_1} (1) \beta_{I_2} (1) \mathfrak{a}_{I_3} (t)),
		\end{equation}
		where $ G_{I}(t) $, $  \psi_{I_1} (1)$, $ \beta_{I_2} (1) $ are defined by Eqs.~\eqref{eq:GtConst}--\eqref{eq:XitConst}.
	\end{theorem}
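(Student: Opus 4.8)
The plan is to reduce Eq.~\eqref{eq:HeisPoisson} for $X(t)=\mathfrak{a}_I(t)$ to the explicit linear system \eqref{eq:PoissonHeisenberg} by computing $e^{\mathcal{L}_{H,\Gamma,f}^*}\mathfrak{a}_I$ from Theorem~\ref{th:HeisSol} and then isolating the ``diagonal'' contribution $I_3=I$.

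First I would observe that $e^{\mathcal{L}_{H,\Gamma,f}^*}$ is precisely the time-one map of the Heisenberg semigroup generated by the quadratic generator \eqref{eq:quadGenHeis} with constant coefficients. Hence $e^{\mathcal{L}_{H,\Gamma,f}^*}\mathfrak{a}_I$ is obtained by setting $t=1$ in the solution formula \eqref{eq:HeisenbergSolution}, where $G(t)=e^{Bt}$ and $\psi(t)$, $\beta_p(t)$ are given by Eqs.~\eqref{eq:GtConst}--\eqref{eq:XitConst}. This gives the operator identity
\begin{equation*}
 e^{\mathcal{L}_{H,\Gamma,f}^*}\mathfrak{a}_I = G_I(1)\sum_{I=I_1\sqcup I_2\sqcup I_3}\psi_{I_1}(1)\,\beta_{I_2}(1)\,\mathfrak{a}_{I_3}.
\end{equation*}
Next, since $X(t)=\mathfrak{a}_I(t)=e^{t\lambda(e^{\mathcal{L}_{H,\Gamma,f}^*}-1)}\mathfrak{a}_I$ and the Poisson-averaged semigroup $e^{t\lambda(e^{\mathcal{L}_{H,\Gamma,f}^*}-1)}=\sum_{n}\frac{(\lambda t)^n}{n!}e^{-\lambda t}e^{n\mathcal{L}_{H,\Gamma,f}^*}$ commutes with $e^{\mathcal{L}_{H,\Gamma,f}^*}$, applying the former to the operator identity above and using its linearity yields
\begin{equation*}
 e^{\mathcal{L}_{H,\Gamma,f}^*}\mathfrak{a}_I(t) = G_I(1)\sum_{I=I_1\sqcup I_2\sqcup I_3}\psi_{I_1}(1)\,\beta_{I_2}(1)\,\mathfrak{a}_{I_3}(t).
\end{equation*}

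Finally I would substitute this into Eq.~\eqref{eq:HeisPoisson} and split off the term with $I_3=I$. That term forces $I_1=I_2=\emptyset$, so $\psi_{\emptyset}(1)=\beta_{\emptyset}(1)=1$ by the conventions in \eqref{eq:psiGtBetaIDef}, and it contributes $G_I(1)\mathfrak{a}_I(t)$; combining with the $-\mathfrak{a}_I(t)$ coming from \eqref{eq:HeisPoisson} gives $(G_I(1)-1)\mathfrak{a}_I(t)$. The remaining terms are exactly $G_I(1)\sum_{I=I_1\sqcup I_2\sqcup I_3,\,I_3\neq I}\psi_{I_1}(1)\beta_{I_2}(1)\mathfrak{a}_{I_3}(t)$, and multiplying everything by $\lambda$ produces Eq.~\eqref{eq:PoissonHeisenberg}.

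The only genuinely delicate point is the interchange in the middle step: that $e^{\mathcal{L}_{H,\Gamma,f}^*}$ evaluated on the Poisson-evolved operator $\mathfrak{a}_I(t)$ may be obtained by applying the Poisson-averaged semigroup to the time-one operator identity. This rests on the commutativity of the two semigroups (both being functions of the same generator $\mathcal{L}_{H,\Gamma,f}^*$), which I would justify formally, consistently with the treatment of unbounded generators elsewhere in the paper; equivalently, one may note that the right-hand side of \eqref{eq:PoissonHeisenberg} and the solution \eqref{eq:HeisenbergSolution} define one and the same linear flow on the finite family $\{\mathfrak{a}_{I'}:I'\subseteq I\}$. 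Everything else is routine substitution and the extraction of the pairing term indexed by empty sets.
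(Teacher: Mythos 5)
Your proposal is correct and follows exactly the route the paper intends: the paper gives no separate proof beyond the remark that $e^{\mathcal{L}_{H,\Gamma,f}^*}\mathfrak{a}_I$ is computed from Theorem~\ref{th:HeisSol} at $t=1$ and substituted into Eq.~\eqref{eq:HeisPoisson}, which is precisely your argument, including the extraction of the $I_3=I$ term. Your extra care about the commutation of the two semigroups (both functions of the same generator) is a reasonable formal justification of a step the paper leaves implicit.
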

	
	In particular, for $ I = \{1\} $ and $ I = \{1, 2\} $ we have
	\begin{align*}
		\frac{d}{dt} \mathfrak{a}_{1} (t) &=  (e^{B_1} - 1)\mathfrak{a}_{1} (t) + \frac{e^{B_1 } - 1}{B_1} \varphi_1,\\
		\frac{d}{dt}\mathfrak{a}_{12} (t) &=  (e^{B_1 + B_2}-1)  \mathfrak{a}_{12} (t) + e^{B_1}\mathfrak{a}_{1} (t) \frac{e^{B_2} - 1}{B_2} \varphi_{2} + \frac{e^{B_1} - 1}{B_1} \varphi_{1}  e^{B_2}\mathfrak{a}_{2} (t) + \frac{e^{B_1 + B_2} - 1}{B_1 + B_2} \beta_{12}.
	\end{align*}
	By averaging these equations with respect to the initial density matrix one obtains
	\begin{align*}
		\frac{d}{dt} D_{12}(t) =& (e^{B_1 + B_2}-1) D_{12}(t)   + \frac{e^{B_1 + B_2} - 1}{B_1 + B_2} \beta_{12} \\
		&+ (e^{B_1} - 1) (e^{B_2} - 1) \mu_{1} (t) \mu_{2} (t) +
		(e^{B_1} -1)\mu_{1} (t) \frac{e^{B_2} - 1}{B_2} \varphi_{2} + \frac{e^{B_1} - 1}{B_1} \varphi_{1}  (e^{B_2} -1)\mu_{2} (t).
	\end{align*}
	Thus, similarly to \cite{Linowski2021} the equation for $ D_{12}(t) $ is not closed and depends on the first moments $ \mu(t) $. Hence, such evolution does not preserve the Isserlis-Wick theorem. Actually, it is not a surprise due to the fact that only the GKSL generators leading to preservation of Gaussian states during evolution are quadratic ones \cite[Prop.~4]{Heinosaari2010}. 
	
	Eq.~\eqref{eq:PoissonHeisenberg} can be solved as
	\begin{equation*}
		\mathfrak{a}_{I} (t) = e^{\lambda ( G_{I}(1) -1) t} \mathfrak{a}_{I} (0)+ \lambda \int_0^t e^{\lambda ( G_{I}(1) -1)(t-\tau)} G_{I}(1) \sum_{I = I_1 \sqcup I_2  \sqcup I_3, I_3 \neq I} \psi_{I_1} (1) \beta_{I_2} (1) \mathfrak{a}_{I_3} (\tau) d\tau,
	\end{equation*}
	where $  \mathfrak{a}_{I_3} (\tau) $ at the right-hand side of this equation has $ I_3 $ of lower cardinality than $ I $ This allows one to solve the equation for $ \mathfrak{a}_{I}(t) $ with $ I $ of lower cardinality and substitute them into equations for $ \mathfrak{a}_{I}(t) $ with $ I $ of higher cardinality and obtain solutions for any given $ I $.
	
	Similarly to \cite{NosTer20, Linowski2021} one can consider a master equation of the form
	\begin{equation*}
		\frac{d}{dt} X(t) = \sum_k \lambda_k \left(e^{\mathcal{L}_{H^{(k)}, \Gamma^{(k)}, f^{(k)}}^*} X(t)  - X(t) \right), \qquad \lambda_k > 0,
	\end{equation*}
	i.e. with the generator, which is a combination of the generators of \eqref{eq:HeisPoisson}. Then, analogously to Eq.~\eqref{eq:PoissonHeisenberg}, one has for $ X(t) = \mathfrak{a}_{I} (t) $ the following Heisenberg equation
	\begin{equation}\label{eq:multiPoissonHeisenberg}
		\frac{d}{dt}\mathfrak{a}_{I} (t) =  \sum_k \lambda_k (( G_{I}^{(k)}(1) -1) \mathfrak{a}_{I} (t)+ G_{I}^{(k)}(1) \sum_{I = I_1 \sqcup I_2  \sqcup I_3, I_3 \neq I} \psi^{(k)}_{I_1} (1) \beta^{(k)}_{I_2} (1) \mathfrak{a}_{I_3} (t)),
	\end{equation}
	where  $ G_{I}^{(k)}(t) $, $  \psi_{I_1}^{(k)} (1)$, $ \beta_{I_2}^{(k)} (1) $ are defined similarly to Eqs.~\eqref{eq:GtConst}--\eqref{eq:XitConst}.
	
	After averaging with respect to the initial density matrix both Eq.~\eqref{eq:PoissonHeisenberg} and Eq.~\eqref{eq:multiPoissonHeisenberg} allow one to obtain any finite-order moment dynamics by solving a system of a finite number of  linear ordinary differential equations. Let us also remark that analogously to \cite{NosTer20} Heisenberg equations are enough to define multi-time correlations functions by the regression formula.
	
	\section{Conclusions}
	
	Similarly to the case of averaging unitary dynamics with respect to the Levy  processes and fields  \cite{Davies1972,Kossakowski1972, Kummerer87, Holevo1996, Holevo98} the GKSL equations arising from averaging with respect to the Possion process are a key ingredient for generalization of our result to the arbitrary  Levy  processes and fields due to the Levy-Khintchine theorem \cite{Sato2001}. The other important ingredient is the GKSL equations arising  from averaging with respect to the Wiener process which seems to be also manageable by methods developed here similarly to \cite{Ivanov2022}.
	
	The most of the results of this work can be generalized for the fermionic case. But what is more important is that they could be generalized for the case, when lemma \ref{lem:LeibnizType} leads to the closed Heisenberg equations for some operators only, but not all the moments of creation and annihilation operators. It could be interpreted as some kind of dissipative analog of dynamical symmetry \cite{Malkin1979}. And averaging with respect to the Poisson processes or even the arbitrary Levy  processes seems to preserve such a symmetry. We think that it is an important direction of further development.


\begin{thebibliography}{10}
	
\bibitem{Ter19}
A.~E.~Teretenkov, \textquotedblleft Dynamics of Moments for Quadratic GKSL Generators,\textquotedblright Math. Notes \textbf{106} (1--2), 151--155 (2019).

\bibitem{Teretenkov20} 
A.~E.~Teretenkov, \textquotedblleft Dynamics of Moments of Arbitrary Order for Stochastic Poisson Compressions,\textquotedblright Math. Notes \textbf{107} (4) (2020), 695--698 (2020).

\bibitem{NosTer20} 
Iu.~A.~Nosal and A.~E.~Teretenkov, \textquotedblleft Exact Dynamics of Moments and Correlation Functions for GKSL Fermionic Equations of Poisson Type,\textquotedblright Math. Notes \textbf{108} (6), 911--915 (2020).

\bibitem{Linowski2021}
T.~Linowski, A.~Teretenkov and L.~Rudnicki, \textquotedblleft Dissipative evolution of covariance matrix beyond quadratic order to probability and statistics,\textquotedblright   arXiv:2105.12644 (2021).

\bibitem{Ivanov2022}
D.~D.~Ivanov and A.~E.~Teretenkov, \textquotedblleft Moments dynamics and stationary states for classical diffusion-type GKSL equations,\textquotedblright  arXiv:2203.01472 (2022).


\bibitem{bausch1967description}
R.~Bausch and A.~Stahl, \textquotedblleft On the description of noise in quantum systems,\textquotedblright Zeitschrift f{\"u}r Physik A \textbf{204} (1), 32--46 (1967).

\bibitem{Vanheuverzwijn1978}
P.~Vanheuverzwijn, \textquotedblleft Generators for quasi-free completely positive semigroups,\textquotedblright Ann. Inst. H. Poincare Sect. A \textbf{29}, 123--138 (1978).

\bibitem{dodonov1985quantum}
V.~V.~Dodonov and O.~V.~Manko, \textquotedblleft Quantum damped oscillator in a magnetic field,\textquotedblright Physica A \textbf{130} (1--2), 353--366 (1985).

\bibitem{DodMan86}
V.~V.~Dodonov and V.~I.~Man'ko, \textquotedblleft Evolution equations for the density matrices of linear open systems,\textquotedblright in \emph{Classical and Quantum Effects in Electrodynamics. Proc. Lebedev Phys. Inst.}, Vol. 176 (A.\,A~Komar, ed.), pp. 53--60 (Nova Science, Commack, 1988).

\bibitem{Prosen10}
T.~Prosen and T.~H. Seligman, \textquotedblleft Quantization over boson operator spaces,\textquotedblright J. Phys. A: Math. Theor. \textbf{43} (2010) 392004 (39).

\bibitem{Heinosaari2010}
T.~Heinosaari, A.~S.~Holevo, M.~M.~Wolf, \textquotedblleft The semigroup structure of Gaussian channels,\textquotedblright Quantum Inf. Comput. \textbf{10} (7--8), 619--635  (2010).

\bibitem{Holevo2012}
A.~S.~Holevo, \emph{Quantum systems, channels, information: a mathematical introduction} (De Gruyter, Berlin, 2012).

\bibitem{Mandel1995}
L.~Mandel and E.~Wolf, \emph{Optical coherence and quantum optics} (Cambridge University Press, Cambridge, 1995).

\bibitem{Morzhin2021}
O.~V.~Morzhin and A.~N.~Pechen, \textquotedblleft Minimal time generation of density matrices for a two-level quantum system driven by coherent and incoherent controls,\textquotedblright Internat. J. Theoret. Phys. \textbf{60}, 576--584 (2021).

\bibitem{Morzhin2019}
O.~V.~Morzhin and A.~N.~Pechen, \textquotedblleft Maximization of the overlap between density matrices for a two-level open quantum system driven by coherent and incoherent controls,\textquotedblright Lobachevskii J. Math. \textbf{40} (10), 1532--1548 (2019).

\bibitem{Manko2022}
O.~V.~Man'ko, \textquotedblleft Coherent States of a Free Particle with Varying Mass in the Probability Representation of Quantum Mechanics,\textquotedblright J. Rus. Laser Research \textbf{43} (1)  90--95 (2022).

\bibitem{Bolamos2021}
J.~R.~Bolanos-Servin, R.~Quezada, and J.~I.~Rios-Cangas, \textquotedblleft A Characterization of Quantum Gaussian States in Terms of Annihilation Moments,\textquotedblright arXiv:2111.06570 (2021).

\bibitem{Flynn2021}
V.~P.~Flynn, E.~Cobanera, and L.~Viola, \textquotedblleft Topology by dissipation and Majorana bosons in metastable quadratic Markovian dynamics,\textquotedblright arXiv:2104.03985 (2021).

\bibitem{Agredo2021a}
J. Agredo, F. Fagnola, and D. Poletti, \textquotedblleft Gaussian Quantum Markov Semigroups on a One-Mode Fock Space: Irreducibility and Normal Invariant States,\textquotedblright Open Sys. Information Dyn. \textbf{28} (01), 2150001 (2021).

\bibitem{Agredo2021b}
J. Agredo, F. Fagnola, and D. Poletti, \textquotedblleft The decoherence-free subalgebra of Gaussian Quantum Markov Semigroups,\textquotedblright arXiv:2112.13781 (2021).

\bibitem{Barthel2021}
T.~Barthel and Y.~Zhang, \textquotedblleft Solving quasi-free and quadratic Lindblad master equations for open fermionic and bosonic systems,\textquotedblright arXiv:2112.08344 (2021).

\bibitem{Gaidash21}
A.~Gaidash, A.~Kozubov, G.~Miroshnichenko, and A.~D.~Kiselev, \textquotedblleft Quantum dynamics of mixed polarization states: Effects of environment-mediated intermode coupling,\textquotedblright JOSA B \textbf{38} (9), 2603--2611 (2021).

\bibitem{Medvedeva21}
S.~Medvedeva, A.~Gaidash, A.~Kozubov,  and G.~Miroshnichenko, \textquotedblleft Dynamics of field observables in quantum channels,\textquotedblright J. Phys.: Conf. Series \textbf{1984} (1), 012007 (2021).

\bibitem{Ferialdi2021}
L.~Ferialdi and  L~Diosi, \textquotedblleft General Wick's theorem for bosonic and fermionic operators,\textquotedblright Phys. Rev. A, \textbf{104} (5), 052209 (2021).

\bibitem{Schork2021}
M.~Schork,\textquotedblleft Recent Developments in Combinatorial Aspects of Normal Ordering,\textquotedblright Enumer. Combin. Appl. \textbf{1} (2021).

\bibitem{Michalowicz2011}
J.~V.~Michalowicz, J.~M.~Nichols, F.~Bucholtz and C.~C.~Olson, \textquotedblleft A general Isserlis theorem for mixed-Gaussian random variables\textquotedblright Stat. Prob. Lett. \textbf{81} (8), 1233-1240 (2011).

\bibitem{Vignat2012}
C.~Vignat, \textquotedblleft A generalized Isserlis theorem for location mixtures of Gaussian random vectors\textquotedblright Stat. Prob. Lett. \textbf{82} (1), 67--71  (2012).

\bibitem{Lindblad76}
G.~Lindblad, \textquotedblleft On the generators of quantum dynamical semigroups,\textquotedblright Comm. Math. Phys. \textbf{48} (2), 119--130 (1976).

\bibitem{Gorini76}
V.~Gorini, A.~Kossakowski, E.C.G.~Sudarshan, \textquotedblleft Completely positive dynamical semigroups of N-level systems,\textquotedblright J. Math. Phys. \textbf{17} (5), 821--825 (1976).

\bibitem{Holevo2018}
A.~S.~Holevo, \textquotedblleft On singular perturbations of quantum dynamical semigroups,\textquotedblright Math. Notes \textbf{103} (1), 133--144 (2018).

\bibitem{Prasolov1994}
V.~V.~Prasolov, \emph{Problems and theorems in linear algebra} (American Mathematical Soc.,  Providence RI, 1994). 

\bibitem{Holevo2002}
A.~S.~Holevo, \emph{Statistical structure of quantum theory} (Springer, Berlin--Heidelberg, 2003).

\bibitem{Ter16}
A.~E.~Teretenkov, \textquotedblleft Quadratic Dissipative Evolution of Gaussian States,\textquotedblright Math. Notes \textbf{100} (4), 642--646 (2016).

\bibitem{Ter17a}
A.~E.~Teretenkov, \textquotedblleft Quadratic Dissipative Evolution of Gaussian States with Drift,\textquotedblright Math. Notes \textbf{101} (2), 341--351 (2017).

\bibitem{Ter19R}
A.~E.~Teretenkov, \textquotedblleft Irreversible quantum evolution with quadratic generator: Review,\textquotedblright Inf. Dime. Anal. Quant. Prob. Rel. Top. \textbf{22}, 1930001 (2019).

\bibitem{scalli2003}
M.~O.~Scully, M.~S.~Zubairy, \emph{Quantum optics} (Cambridge University Press, Cambridge, 1997).

\bibitem{Breuer2002}
H.-P. Breuer and F. Petruccione, \emph{The theory of open quantum systems} (Oxford University Press, Oxford, 2002).

\bibitem{Bogolubov2005}
N.~N.~Bogolubov and D.~V.~Shirkov, \emph{Kvantovye polya} (Fizmatlit, Moscow, 1995) [in Russian].

\bibitem{Chebotarev12}
A.~M.~Chebotarev, A.~E.~Teretenkov, \textquotedblleft Operator-Valued ODEs and Feynman's Formula,\textquotedblright Math. Notes \textbf{92} (6), 837--842 (2012).

\bibitem{Davies1972}
E.~B.~Davies, \textquotedblleft Some contraction semigroups in quantum probability,\textquotedblright Zeitschrift für Wahrscheinlichkeitstheorie und Verwandte Gebiete, 23 (4), 261--273 (1972).

\bibitem{Kossakowski1972} 
A.~Kossakowski, \textquotedblleft On quantum statistical mechanics of non-Hamiltonian systems,\textquotedblright Rep. Math. Phys. \textbf{3} (4), 247--274 (1972).

\bibitem{Kummerer87}
B.~Kummerer, H.~Maassen, \textquotedblleft The essentially commutative dilations of dynamical semigroups on $ M_n $,\textquotedblright Comm. Math. Phys. \textbf{109} (1), 1--22 (1987).

\bibitem{Holevo1996}
A.~S.~Holevo, \textquotedblleft Covariant quantum Markovian evolutions,\textquotedblright J. Math. Phys. \textbf{37} (4), 1812--1832 (1996).

\bibitem{Holevo98} 
A.~S. Holevo, \textquotedblleft Covariant quantum dynamical semigroups: unbounded generators,\textquotedblright \, \emph{Irreversibility and Causality Semigroups and Rigged Hilbert Spaces} (Springer, Berlin--Heidelberg, 1998), 67--81.

\bibitem{Sato2001}
K.~I.~Sato, \emph{Basic results on Levy processes},  Levy processes (Birkhauser, Boston, MA, 2001), 3--37.

\bibitem{Malkin1979}
I.~A.~Malkin and V.~I. Man'ko, \emph{Dinamicheskie simmetrii i kogerentnye sostoyaniya kvantovyh sistem} (Nauka, M., 1979) [in Russian].
	
\end{thebibliography}
\end{document}